\newcommand{\multiline}[1]{%
	\begin{tabularx}{\dimexpr\linewidth-\ALG@thistlm}[t]{@{}X@{}}
		#1
	\end{tabularx}
}
\algrenewcommand\alglinenumber[1]{ #1:}
\renewcommand{\algorithmicrequire}{{\tt \bf{Input:}}}
\renewcommand{\algorithmicensure}{{\tt \bf{Output:}}}
\newtheoremstyle{note}
{3pt}
{3pt}
{\upshape}
{}
{\bfseries}
{:}
{.5em}
{}
\theoremstyle{note}
\newtheorem{lemma}{Lemma}
\newtheorem{proposition}{Proposition}
\theoremstyle{plain}
\newtheoremstyle{noparentheses}
{\topsep}   
{\topsep}   
{\upshape}  
{0pt}       
{\bfseries} 
{.}         
{5pt plus 1pt minus 1pt} 
{\thmname{#1} \thmnumber{#2} \thmnote{#3}}  
\theoremstyle{noparentheses}
\newtheorem{theorem*}{Theorem}
\newtheorem{lemma*}{Lemma}
\newtheorem{definition*}{Definition}
\newtheorem{corollary*}{Corollary}
\newtheorem{proposition*}{Proposition}
\newtheorem{property*}{Property}
\newcommand{\Tr}{\mbox{\rm Tr}\, }
\newcommand{\Diag}{\mbox{\rm Diag}\, }
\newcommand{\ex}{{\mathbb E}}
\newcommand{\subto}{{\text {s.t.}} }
\newcommand*{\trans}{{\mathsf{T}}}
\newcommand*{\ctrans}{{\mathsf{H}}}
\newcommand{\beq}{\begin{equation}}
	\newcommand{\eeq}{\end{equation}}
\newcommand{\bea}{\begin{array}{*{20}{c}}}
	\newcommand{\eea}{\end{array}}
\newcommand{\disps}{\displaystyle}
\newcommand{\sss}{\scriptscriptstyle}
\newcommand{\upsmile}{\buildrel{\lower3pt\hbox{$\scriptscriptstyle\smile$}} 
\over}
\DeclareMathAlphabet\mathbfcal{OMS}{cmsy}{b}{n}
\def\LRT#1#2{\!
	\raisebox{.2ex}{$
		{{\scriptstyle\;#1}\atop{\displaystyle\gtrless}}
		\atop
		{\raisebox{-1.25ex}{$\scriptstyle\;#2$}}
		$}
	\!}
\newcommand{\bzero}{\bm 0}
\newcommand{\bA}{\bm A}
\newcommand{\ba}{\bm a}
\newcommand{\bB}{\bm B}
\newcommand{\bb}{\bm b}
\newcommand{\bc}{\bm c}
\newcommand{\bD}{\bm D}
\newcommand{\bd}{\bm d}
\newcommand{\bF}{\bm F}
\newcommand{\bg}{\bm g}
\newcommand{\bH}{\bm H}
\newcommand{\bh}{\bm h}
\newcommand{\bI}{\bm I}
\newcommand{\bL}{\bm L}
\newcommand{\bJ}{\bm J}
\newcommand{\bM}{\bm M}
\newcommand{\bn}{\bm n}
\newcommand{\bp}{\bm p}
\newcommand{\bR}{\bm R}
\newcommand{\br}{\bm r}
\newcommand{\bS}{\bm S}
\newcommand{\bs}{\bm s}
\newcommand{\bT}{\bm T}
\newcommand{\bU}{\bm U}
\newcommand{\bu}{\bm u}
\newcommand{\bv}{\bm v}
\newcommand{\bx}{\bm x}
\newcommand{\bz}{\bm z}
\newcommand{\rmd}{{\rm d}}
\newcommand{\rmp}{{\rm p}}
\newcommand{\rmr}{{\rm r}}
\newcommand{\rms}{{\rm s}}
\newcommand{\rmt}{{\rm t}}
\newcommand{\rmv}{{\rm v}}
\newcommand{\calN}{{\cal N}}
\newcommand{\calP}{{\cal P}}
\newcommand{\bbC}{{\mathbb C}}
\newcommand{\bbR}{{\mathbb R}}
\newcommand{\bbH}{{\mathbb H}}
\newcommand{\bbS}{{\mathbb S}}
\newcommand{\bSigma}{\mbox{\bm{{$\mit \Sigma$}}}}
\definecolor{RED}{rgb}{1,0,0}\definecolor{BLUE}{rgb}{0,0,1} 
\begin{document}
\title{Radar Network Waveform Design \\for Target Tracking\\
	
	\thanks{The work of Augusto~Aubry and Antonio~De~Maio was supported in part by the European Union under the Italian National Recovery and Resilience Plan	(NRRP) of NextGenerationEU, partnership on “Telecommunications of the Future” (PE00000001 - Program “RESTART”). The work of Tao~Fan, Xianxiang~Yu, and Guolong~Cui was supported in part by the Postdoctoral Innovation Talent Support Program under Grant BX20250410, in part by the National Natural Science Foundation of China under Grants U24B20188 and 62271126. (\emph{Corresponding author: Antonio~De~Maio}.)}
	
	\thanks{Tao~Fan, Xianxiang~Yu, and Guolong~Cui are with the School of Information and Communication Engineering, University of Electronic Science and Technology of China, Chengdu 611731, China. (e-mail: thaumielfan@gmail.com; \{xianxiangyu,cuiguolong\}@uestc.edu.cn).}	
	\thanks{Luca~Pallotta is with the Department of Engineering (DiING), University of Basilicata, I-85100 Potenza, Italy (e-mail:luca.pallotta@unibas.it).}
	\thanks{Augusto~Aubry and Antonio~De~Maio are with the Department of Electrical and Information Technology Engineering, Universit\`a degli Studi di Napoli “Federico II”, I-80125 Napoli, Italy (e-mail: \{augusto.aubry; ademaio\}@unina.it).}
}
	
\author{\IEEEauthorblockN{Tao~Fan, Augusto~Aubry, \emph{Senior Member, IEEE}, Antonio~De~Maio, \emph{Fellow, IEEE}, \\ Luca~Pallotta, \emph{Senior Member, IEEE}, Xianxiang~Yu, and Guolong~Cui, \emph{Senior Member, IEEE}}
}

\maketitle
\IEEEpeerreviewmaketitle

\begin{abstract}
	\boldmath
	This paper addresses the synthesis of slow-time coded waveforms for single target tracking in a radar network operating under colored Gaussian interference. Based on the Posterior Cram\'er Rao Lower Bound (PCRLB), which characterizes the theoretically optimal accuracy of target state estimation, the problem at each tracking frame is formulated as the minimization of the trace of the PCRLB, together with power budget requirements and a similarity constraint to account for transmitter limitations and appropriate waveform features. To tackle this challenging optimization problem, an approximation solution technique is proposed, aimed at better tracking accuracy than the reference code. The resulting approximated problems, endowed with more tractable objective functions through Taylor-series expansion, are solved using a customized block Majorization-Minimization (block-MM) algorithm. The convergence properties of the  developed procedure are thoroughly analyzed. Numerical results illustrate the accuracy improvements in the target state estimation process, and robust tracking performance under uncertain target state conditions achieved by the proposed technique.
\end{abstract}

\begin{IEEEkeywords}
	Code design, majorization-minimization, radar network, posterior Cram\'er Rao lower bound, target tracking.
\end{IEEEkeywords}

\section{Introduction}
In active sensing systems, the transmitted waveform plays a significant role in determining both detection performance and measurement accuracy\cite{richards2010principles}. Recently, the advancements in flexible digital waveform generation technologies and high-speed signal processing equipments, have empowered radar with the capability to dynamically customize waveforms matching the target and environment characteristics. In this respect, considerable efforts, leveraging on the cognitive situational awareness provided by the tracker regarding the predicted target state and by some environment awareness, have been dedicated to waveform selection or waveform design to maintain high-quality trajectory reconstruction\cite{guerci2010cognitive, bell2015cognitive, farina2017impact,cui2020radar}. 

Two main research lines have emerged. The former is focused on optimizing detection probability and measurement estimation accuracy\cite{bell1993information,li2006signal,de2008code,de2009design,aubry2013knowledge,soltanalian2013joint,tang2021constrained,fan2024joint,li2008range}. The latter seeks to achieve an accurate estimate of the target state at the next time step by selecting the optimal waveform from a finite library of pre-configured options or by designing bespoke transmitted signals\cite{kershaw1994optimal,kershaw1997waveform,sira2009waveform,demaio2017book_ch7}. In particular, a dynamic waveform synthesis was first explored in \cite{kershaw1994optimal}, where the optimal signal parameters were determined for tracking one-Dimensional (1D) target motion using a linear model for the observations under the assumption of ideal detection performance ($P_\rmd=1$ and $P_{\rm fa}=0$) and white Gaussian noise. It was shown that the Fisher Information Matrix (FIM) for target state estimation could be extracted directly from the peak curvature of the waveform ambiguity function. Accordingly, the optimal parameters for three types of fast-time waveforms were chosen using two criteria based on the mean-square tracking error and validation gate volume, respectively. In \cite{kershaw1997waveform}, the above method was extended to cluttered scenarios with detection probability less than one and without false alarm, using the predicted mean-square tracking error as the design criterion. In \cite{rago1998detection} and \cite{niu2002tracking}, the tracking performances of different sub-pulses combinations, involving constant and linear Frequency-Modulated (FM) sub-pulses with positive or negative sweep rate, were compared using the expected value of the steady-state estimation error to guide the transmitted signal structure. The concept of adaptive waveform selection was generalized to Interacting Multiple Model (IMM) trackers with one- and two-step look ahead in \cite{suvorova2005multi} and \cite{suvorova2006waveform}. With the goal of minimizing the cumulative probability of track loss and the target state covariance, a joint selection of detection threshold and waveform for single target tracking was pursued in \cite{hong2005optimization}. For two-Dimensional (2D) maneuvering motion tracking, \cite{demaio2017book_ch7} employed the weighted Mean Square Error (MSE) of the target state estimate as the criterion for selecting the optimal waveform from an FM pulse library. Within the framework of sequential Bayesian inference for tracking, the Posterior Cram\'er-Rao Lower Bound (PCRLB) was used in \cite{hurtado2008adaptive} for pulse selection in a polarimetric radar. Additionally, considering multiple active sensors tracking a target, \cite{sira2006waveform} and \cite{sira2007dynamic} configured linear and nonlinear FM pulses across two sensors to minimize the predicted mean square tracking error. Using the same design criterion, \cite{nguyen2015adaptive} investigated the adaptive scheduling of linear FM Gaussian pulses in the radar network composed of a single transmitter and multiple receivers. Furthermore, joint waveform selection and system resource allocation for single or multiple target tracking via a distributed radar network was explored in \cite{benavoli2019joint, yi2020resource,shi2021joint,yan2022radar}. Recently, several learning-based method for waveform selection have also attracted attention \cite{thornton2022universal,zhu2023cognitive}. These approaches leverage data-driven techniques to adapt waveform selection dynamically, potentially improving tracking performance in complex environments. However, the effectiveness of all the aforementioned methods remains closely tied to the structure of the pre-configured waveforms\cite{niu2002tracking} as well as to the diversity and size of the available waveform library.

Noticeable, only few studies have focused on the problem of waveform synthesis to enhance the accuracy of target state estimation for tracking. In this respect, in \cite{sen2010ofdm}, considering a wideband Orthogonal Frequency Division Multiplexing (OFDM) signaling scheme, the complex weights applied at the transmitter in a co-located Multiple-Input Multiple-Output (MIMO) radar were built in order to maximize the mutual information between the target state and measurement vectors. The work in \cite{huleihel2013optimal} devised the transmitted signal matrix for MIMO radar to estimate target angle and complex amplitude by minimizing the Bayesian Cram\'er-Rao Bound (BCRB) or the Reuven-Messer Bound (RMB). However, these studies focused exclusively on waveform design for single radar stations, leaving waveform optimization for target tracking in distributed radar infrastructures an open problem.

To fill the aforementioned gap, this paper focuses on synthesizing slow-time coded waveforms for a radar network tracking a point-like target in a colored Gaussian interference environment. Specifically, target motion is assumed to follow a linear Constant Velocity (CV) model in a 2D scenario. Each radar node is equipped with a uniform linear array, and emits a slow-time coded (both in amplitude and in phase) pulse train of linear chirps. In this context, capitalizing on the one-step-ahead predicted target state, the theoretical detection probability and the measurement noise covariance matrix of the time-delay, Doppler shift and arrival angle observations at each radar nodes are first derived, showing that they are functions of the slow-time codes. Then, the trace of the PCRLB (evaluated under the assumption of negligible process noise \cite{tichavsky1998posterior, hernandez2004comparison}) is used as waveform optimization criterion. Moreover, to meet basic radar and target tracking requirements, a power budget limitation, along with a similarity constraint that controls relevant waveform characteristics, are imposed on the probing waveforms at each radar node. To address the challenging non-convex nature of the problem, an approximate solution technique is proposed. The approach involves reformulating the trace of the PCRLB into a more tractable expression to enable radars degrees of freedom  optimization. Specifically, a second-order Taylor expansion is employed to achieve this approximation. The resulting problem is then solved using a customized block Majorization-Minimization (block-MM) algorithm with ensured convergence. At the analysis stage, some numeral results are provided to assess the performance of the new waveform resource allocation strategy for radar networks in terms of convergence and target state estimation accuracy.

Summarizing, the main technical contributions of this paper are: 
\begin{enumerate}[a)]
	\item The computation of the PCRLB for target state estimation in a radar network emitting slow-time coded waveforms, and the formulation of the radar codes design problem aimed at minimizing the trace of the PCRLB under some practical constraints.
	\item The development of a solution technique to optimize tracking accuracy, leveraging the block-MM framework to monotonically minimize the resulting objective.
	\item The analysis of several case studies to demonstrate that the devised radar codes would enhance tracking accuracy compared to existing well-known waveforms while maintaining robust tracking performance under uncertain target state conditions.
\end{enumerate}

The rest of the paper is organized as follows. Section \ref{SecII} introduces the signal model for target state estimation and formulates the design problem of the waveforms emitted by the radar nodes of the network. In Section \ref{SecIII}, an approximation of the objective function is introduced based on a Taylor expansion and  solution technique under the umbrella of the block-MM is developed for synthesizing the radar codes. Section \ref{SecIV} presents some case studies to evaluate the performance of the devised methodology. Finally, in Section \ref{SecV}, conclusions and some possible future research lines are provided. 

\emph{Notation:} Throughout the paper, the following notations have been adopted. $\bbR^N$, $\bbC^N$, $\bbR^{N\times M}$, $\bbC^{N\times M}$, $\bbS_{++}^{N}$, $\bbH_{++}^{N}$ are, respectively, the set of $N$-dimensional vectors of real numbers, complex numbers,  ${N\!\times \!M}$ real matrices, ${N\!\times\! M}$ complex matrices, ${N\!\times\! N}$ symmetric positive matrices, and ${N\!\times\! N}$ Hermitian positive matrices. Scalars, vectors, and matrices are denoted by standard lowercase letter $a$, lower case boldface letter $\ba$, and upper case boldface letter $\bA$, respectively. $\odot$ and $\otimes$ denote, respectively, the Hadamard product and the Kronecker product. For any complex number $x$, $\Re\{x\}$, $\Im\{x\}$, and $|x|$ indicate, respectively, the real part, imaginary part, and modulus. Letter $\jmath$ represents the imaginary unit (i.e., $\jmath$ = $\sqrt{-1}$). The symbols $(\cdot)^\trans$, $(\cdot)^{*}$, $(\cdot)^\ctrans$, $(\cdot)^{-1}$,  $\Tr(\cdot)$, $\|\cdot\|$, $\lambda_{\min}(\cdot)$, and $\lambda_{\max}(\cdot)$ denote the transpose, complex conjugate, conjugate transpose, inverse, trace, the Euclidean norm of a vector, the smallest and the largest eigenvalue of a matrix, respectively. Given two sets $A$ and $B$, $A \setminus B$ denotes the set of all
elements in $A$ that are not in $B$. Finally, $\frac{\partial f(x)}{\partial x}\!$ is the derivative of $f(x)$ with respect to $x$. 

\section{System Model \& Problem Formulation}\label{SecII}
Consider a 2D radar network comprising $N$ widely separated radar nodes. Each node has transceiver capabilities and is equipped with a uniform linear array of $N_\rmr $ elements. The location of the $n$-th radar node is denoted as $\bp_{n}=[x_{}^n,y_{}^n]^\trans\in\bbR^2$ for $n\in\calN=\{1,2,\cdots,N\}$. In the following, it is assumed that radar nodes transmit frequency orthogonal signals and each radar is tuned to its own transmitted frequency. Otherwise stated, each sensor perceives the echoes from its own probing signal.

The sensing scenario of interest subsumes the radar network tracking a moving point-like target within its surveillance region. Let $T$ denote the update tracking interval between the two contiguous frames. The kinematic target state at time $kT$ is specified by its location $[x_k,y_k]^\trans\in\bbR^2$ and velocity $[\dot x_k,\dot y_k]^\trans\in\bbR^2$. A notional illustration of the sensing configuration at the discrete time (frame) $k$ is depicted in Fig. \ref{f1}.

In the following subsections, the target characterization in terms of kinematics, radar echo detection and measurement modelling is first established. Subsequently, the target state estimation problem is formulated for single-target tracking in the radar network.

\subsection{Target Dynamic Model}
Let $\bx_k=[x_k, \dot x_k, y_k, \dot y_k]^\trans\in\bbR^4$ be the target state vector at frame $k$. Suppose that the target motion follows a linear CV model\cite{li2003survey}, i.e., 
\beq
	\bx_{k+1} = \bF\bx_k + \bu_k,
\eeq
where the transition matrix $\bF$ is given by 
\beq
\bF=\bI_2 \otimes \disps \left[\!\!\bea 
			1 & T \\
			0 & 1
			\eea\!\!\right],
\eeq
with $\bI_2$ being the $2\times 2$ identity matrix. The term $\bu_k$ is the process noise associated with the target motion, which obeys a zero-mean Gaussian distribution with covariance \cite{li2003survey,bar2011tracking}
\beq
	\bU_k = \delta \bI_2 \otimes  \left[\!\!\bea 
	\frac{1}{3}T^3 & \frac{1}{2}T^2 \\[0.5em]
	\frac{1}{2}T^2 & T
	\eea \!\!\right],
\eeq
with $\delta$ being the power level of the process noise. 

\begin{figure}
	\centering
	\includegraphics[width=2.8in]{./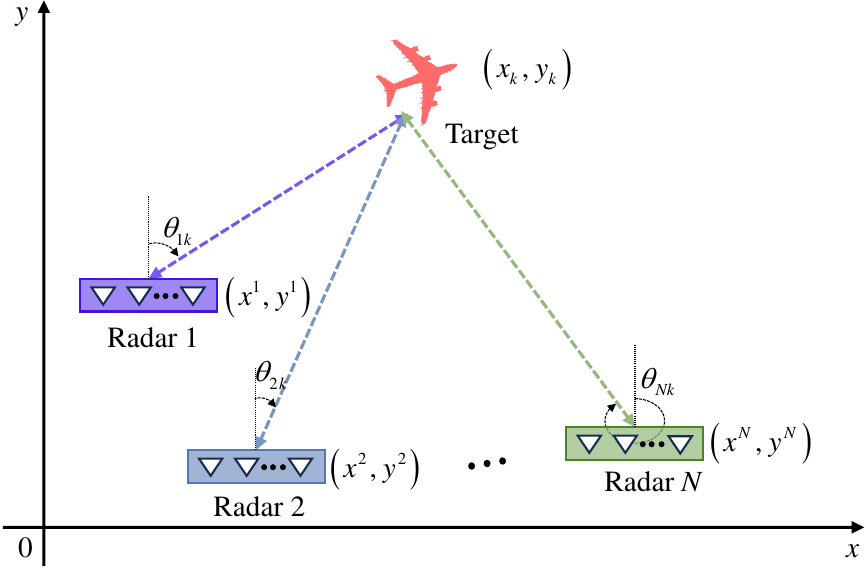}
	\caption{\small Spatial geometry relationship between the radar systems and target at frame $k$. The locations of the $n$-th radar node and the target at frame $k$ are denoted by $\bp_n=[x^n,y^n]^\trans$ and $[x_k,y_k]^\trans$, respectively. $\theta_{nk}$ is the azimuthal angle of the $n$-th node measured clockwise from the normal direction (i.e., the positive y-axis).}\label{f1}
\end{figure}

\subsection{Radar Echo Model}
Suppose that at frame $k$, each radar node emits a burst of $M$ coherent pulses, modulated by a slow-time code. Each pulse is a linear chirp signal $s(t)$ with bandwidth $B$ and pulsewidth $T_{\rm p}$. The baseband transmitted signal of the $n$-th radar node is given by\cite{de2008code}
\beq\label{trans_st}
	{\bar s_{nk}}\left( t \right) = \sum\limits_{m = 0}^{M - 1} {\bc_{nk}( m )s( {t - m{T_{\rmr}}} )},
\eeq
where $T_{\rmr}$ denotes the pulse repetition interval (PRI), and $\bc_{nk} = [\bc_{nk}(0), \bc_{nk}(1), \cdots, \bc_{nk}(M-1)]^\trans\in\bbC^{{M}}$ represents the slow-time code of the $n$-th radar node at frame $k$. Needless to say, $MT_\rmr\leq T$.

The echo signal backscattered from the moving point-like target and received by the $n$-th radar node is downconverted and sampled with a temporal step-size $\Delta t$. Ignoring the intra-pulse Doppler shift, the space-time observation vector of the $n$-th radar can be expressed as\cite{de2009code}
\begin{align}\label{echo}
	\br_{nk} =& \alpha_{nk}(\ba_{\rmt}(f_{\rmd, nk})\!\odot\!\bc_{nk} ) \!\otimes\! \bs(\tau_{nk}) \!\otimes\! \ba_{\rms} (\theta_{nk})\! +\!\bv_{nk}\nonumber\\
	=&\alpha_{nk}\ba_{\rms\rmt, nk }+\bv_{nk},
\end{align}
where $\ba_{\rms\rmt, nk }\!=\!(\ba_{\rmt}(f_{\rmd, nk})\!\odot\!\bc_{nk} ) \!\otimes\! \bs(\tau_{nk}) \!\otimes\! \ba_{\rms} (\theta_{nk})$, and 
\begin{itemize}[]
	\item $\alpha_{nk}$ is the target’s complex amplitude related to the $n$-th radar node at the $k$-th frame accounting for both the target and channel propagation effects.
	\item $\tau_{nk}=2r(\bx_k, \bp_{n})/c$ is the round-trip time with $c$ being the speed of light and $r(\bx_k, \bp_{n})$ being the distance between the target and the $n$-th radar node at frame $k$, that is
	\beq
	r(\bx_k, \bp_{n}) = \sqrt{{( {x_k^{} - x_{}^n} )}^2 + {{( {y_k^{} - y_{}^n} )}^2}}.
	\eeq
	\item  $f_{\rmd, nk}=2v_{\rmd}(\bx_k, \bp_{n})/\lambda_n$ is the Doppler shift of the target on the $n$-th radar node at frame $k$, where $\lambda_n$ is the $n$-th central operating wavelength and $v_{\rmd}(\bx_k, \bp_{n})$ is the radial velocity, i.e.,
	\beq
	v_{\rmd}(\bx_k, \bp_{n}) = -\disps \frac{{( x_k- {x_{}^n}){{\dot x}_k} + ( y_k-{y_{}^n}  ){{\dot y}_k}}}{{\sqrt {{{( {x_k^{} - x_{}^n} )}^2} + {{( {y_k^{} - y_{}^n} )}^2}} }}.
	\eeq
	\item  $\theta_{nk}$ represents the azimuthal angle of the target to the $n$-th radar node at frame $k$ as specified in Fig. \ref{f1}, given by
	\beq
	\theta_{nk} = {\arctan \left[ {\frac{{\left( {x_k^{} - x_{}^n} \right)}}{{\left( {y_k^{} - y_{}^n} \right)}}} \right]}.
	\eeq
	\item $\ba_{\rmt}(f_{\rmd})\in\bbC^{M}$  and $\ba_{\rms} (\theta)\in\bbC^{N_\rmr}$ denote, respectively, the temporal steering vector and spatial steering vector, which can be expressed respectively as
	\begin{align}
		\!\!\!\!\!\ba_{\rmt}(f_{\rmd}) &\!=\! \big[1,e^{\jmath  2\pi f_{\rmd}T_{\rmr}},\cdots,e^{\jmath  2\pi f_{\rmd}(M-1)T_{\rmr}}\big]^\trans,\\
		\!\!\!\!\!\ba_{\rms}(\theta) &\!=\! \big[1,e^{\jmath 2\pi d_n\frac{\sin \theta}{\lambda_n}},\cdots,e^{\jmath 2\pi (N_\rmr-1)d_n\frac{\sin \theta}{\lambda_n}}\big]^\trans,\!
	\end{align}
	with $d_n$ the array inter-element spacing of the $n$-th sensor.
	\item $\bs(\tau_{nk}) \in \bbC^{N_\rmp}$ is the vector collecting the samples of the fast-time pulse $s(t)$ delayed by  $\tau_{nk}$, i.e., $\bs(\tau_{nk}) =[s(\tau_{0, nk}-\tau_{nk}), s(\tau_{0, nk}+\Delta t-\tau_{nk}), \cdots, s(\tau_{0, nk}+(N_\rmp-1)\Delta t-\tau_{nk})]^\trans\in\bbC^{N_{\rmp}}$, where $\tau_{0, nk}\approx\tau_{nk}$ is the first time sample where the echo from the range cell of interest occurs, $N_{\rmp}\Delta t=T_\rmp$, and  $N_{\rmp}$ is the number of samples.
	\item $\bv_{nk}\in\bbC^{MN_\rmp N_\rmr}$ denotes the signal-independent interference vector of the $n$-th radar node at frame $k$, including additive noise, jamming, radio frequency interference, and hot clutter. It is modeled as a zero-mean complex circular Gaussian vector with positive definite covariance matrix $\mathbb{E}[\bv_{nk}\bv_{nk}^\ctrans]=\bSigma_{{{\rmv}, nk} }\in\bbH_{++}^{{MN_\rmp N_\rmr}}$. Assuming statistical separability across slow-time, fast-time, and space dimensions, together with stationarity within each domain, the overall covariance can be approximated as a Kronecker product of the following lower-dimensional covariances \cite{srivastava2008models}: 
	\beq
	\bSigma_{{\rmv,  nk} } = \bSigma_{{\rmt}, nk} \otimes \bI_{N_\rmp} \otimes \bSigma_{\rms, nk},
	\eeq
	where $\bI_{N_\rmp}$ is the covariance matrix of the signal-independent interference in the fast-time domain (assumed white), whereas, $\bSigma_{{\rmt}, nk}\in\bbH_{++}^{{M}}$ and $\bSigma_{{\rms}, nk}\in\bbH_{++}^{{N_\rmr}}$ are, respectively, the covariance matrices in slow-time and space. 
\end{itemize}

\subsection{Detection and Measurement Model}

\subsubsection{Detection Model}
The problem of detecting a target using the signal model in \eqref{echo} can be formulated as the following binary hypothesis test
\beq \label{binhypo}
\bigg\{{\begin{array}{*{20}{l}}
		{H_{0, nk}}\!: \br_{nk} = \bv_{nk} \\
		{H_{1, nk}}\!: \br_{nk} = \alpha_{nk}\ba_{\rms\rmt, nk } + \bv_{nk}
\end{array}}.
\eeq
Assuming that $\alpha_{nk}$ is an unknown deterministic parameter, the Generalized Likelihood Ratio Test (GLRT) detector\footnote{According to the Neyman-Pearson criterion, if the phase of $\alpha_{nk}$ is uniformly distributed over $[0, 2\pi)$ and the interference covariance matrix $\bSigma_{{\rmv}, nk}$ is known, then the GLRT detector is equivalent to the optimum test.} is given by
\beq
\big|\br_{nk}^\ctrans\bSigma_{{\rmv}, nk}^{-1}\ba_{\sss\rms\rmt, nk }\big|^2\LRT{H_{1, nk}}{H_{0, nk}} \eta,
\eeq
where $\eta$ indicates the detection threshold determined by the desired value of the false alarm probability ($P_{\text{fa}}$). Moreover, the detection probability of the $n$-th radar node at frame $k$ can be written as
\beq\label{Pd}
P_{\rmd, nk}(\bc_{nk}) = Q_1\Big(\sqrt{2 \text{SINR}(\bc_{nk}, \tau_{nk}, f_{\rmd, nk}, \theta_{nk})}, \sqrt{2b_0}\Big), 
\eeq
where $Q_1(\cdot, \cdot)$ is the Marcum Q function of order 1, $b_0=-\ln P_{\text{fa}}$, and $\text{SINR}(\bc_{nk}, \tau_{nk}, f_{\rmd, nk}, \theta_{nk})$ is the output SINR, defined in \eqref{SINR1} at the top of the next page. 
\begin{figure*}
\vspace{-3mm}
\beq\label{SINR1}
	\text{SINR}(\bc_{nk}, \tau_{nk}, f_{\rmd, nk}, \theta_{nk})= |\alpha_{nk}|^2\|\bs(\tau_{nk})\|^2 \Big(\ba^\ctrans_{\rms} (\theta_{nk})\bSigma_{{\rms}, nk}^{-1}\ba_{\rms} (\theta_{nk}) (\ba_{\rmt}(f_{\rmd, nk}) \! \odot\! \bc_{nk})^\ctrans\!\bSigma_{{\rmt}, nk}^{-1} (\ba_{\rmt}(f_{\rmd, nk}) \! \odot \! \bc_{nk})\Big)
	\vspace{-2mm}
\eeq
\vspace{-3mm}
\hrulefill
\end{figure*}

Under the assumption of $\tau_{0, nk} \approx \tau_{nk}$, the term $\|\bs(\tau_{nk})\|^2$ in \eqref{SINR1} can be approximated by $\|\bs(\tau_{nk})\|^2=N_\rmp$. Then, the output SINR can be simplified as\footnote{For any finite sampling time, the reported equality holds true only approximately. However, as long as the sampling interval is short enough, the approximation becomes tighter and tighter.}
\beq
\text{SINR}(\bc_{nk}, \tau_{nk}, f_{\rmd, nk}, \theta_{nk}) = \varepsilon_{\alpha, nk} \bc_{nk}^\ctrans\bM_{0, nk}\bc_{nk},
\eeq
where $\bM_{0, nk}=\bSigma_{{\rmt}, nk}^{-1} \odot (\ba_{\rmt}(f_{\rmd, nk})\ba_{\rmt}(f_{\rmd, nk})^\ctrans)\in \bbH^{M}$, and $\varepsilon_{\alpha, nk} = N_\rmp|\alpha_{nk}|^2 \ba^\ctrans_{\rms} (\theta_{nk})\bSigma_{{\rms}, nk}^{-1}\ba_{\rms} (\theta_{nk})$.
\subsubsection{Measurement Model} Assuming that the false alarm probability\footnote{In practice, the false alarm probability of radar systems is typically set between $10^{-4}$ and $10^{-6}$ \cite{richards2010principles}.} $P_{\rm fa}\approx0$, the measurement model, also referred to as the observation model\cite{bar2011tracking}, for range (time-delay), velocity (Doppler shift) and arrival angle is expressed as
\beq\label{measuremodel}
\bz_{nk} = \bh_n({\bx_k}) +\bn_{nk},
\eeq
where $\bh_n(\bx_k) \!=\! [r(\bx_k, \bp_{n}), v_{\rmd}(\bx_k, \bp_{n}), \theta(\bx_k, \bp_{n})]^\trans$ denotes the ground-truth measurement of the $n$-th radar node at frame $k$, and $\bn_{nk}$ is the measurement noise vector with covariance matrix $\bR_{nk}(\bc_{nk})$.

It is worth stressing that the measurement noise covariance matrix $\bR_{nk}(\bc_{nk})$ is dependent on the slow-time codes and can be computed as
\beq
	\bR_{nk}(\bc_{nk}) = \bT_n\textbf{CRLB}_{nk}(\bc_{nk}) \bT_n,
\eeq
where $\bT_n=\Diag([c/2, \lambda_n/2, 1]^\trans)$ is a matrix that allows to convert time-delay and Doppler shift to physical distance and velocity, whereas $\textbf{CRLB}_{nk}(\bc_{nk})$ is the CRLB matrix for the estimation of $\tau_{nk}$, $f_{\rmd, nk}$, and $\theta_{nk}$ at radar node $n$ in frame $k$. According to the space-time observation vector defined in \eqref{echo}, as shown in Appendix \ref{proofCRLB} of the supplementary material, $\textbf{CRLB}_{nk}(\bc_{nk})$ is a diagonal matrix whose diagonal elements are  
	\begin{subequations}\label{CRLB}
	\!\!\!\!\!\!\begin{align}
		\begin{split}\label{CRLBt}
			[\textbf{CRLB}_{nk}(\bc_{nk})]_{11} &= \frac{1}{\varepsilon_{\tau, nk}} \displaystyle \frac{1}{\bc_{nk}^\ctrans \bM_{0, nk} \bc_{nk}},
		\end{split}\\
		\begin{split}\label{CRLBf}
			[\textbf{CRLB}_{nk}(\bc_{nk})]_{22} &= \frac{1}{\varepsilon_{f_\rmd, nk}} \displaystyle \frac{\bc_{nk}^\ctrans \bM_{0, nk} \bc_{nk}}{\phi(\bc_{nk})},
		\end{split}\\
		\begin{split}\label{CRLBtheta}
			[\textbf{CRLB}_{nk}(\bc_{nk})]_{33} &= \frac{1}{\varepsilon_{\theta, nk}} \displaystyle \frac{1}{\bc_{nk}^\ctrans \bM_{0, nk} \bc_{nk}},
		\end{split}
	\end{align}
\end{subequations}
where 
\beq
\phi(\bc_{nk})\! =\! \bc_{nk}^\ctrans \!\bM_{0, nk} \bc_{nk}\bc_{nk}^\ctrans \!\bM_{2, nk} \bc_{nk} \!-\!|\bc_{nk}^\ctrans \bM_{1,nk} \bc_{nk}|^2\!\!\!\nonumber
\eeq
with $\bM_{1, nk} = \Diag(\bb_{\rmt}^*)\bM_{0, nk}\!\in\!\bbC^{M\times M}$, $\bM_{2, nk} = \bM_{0, nk}\odot(\bb_{\rmt}\bb^\ctrans_{\rmt})\!\in\!\bbH^M$, $\bb_{\rmt} = [0, \jmath 2\pi T_{\rmr}, \cdots, \jmath 2\pi  T_{\rmr}(M\!-\!1)]^\trans\!\in\!\bbC^M$.

The definitions of $\varepsilon_{\tau, nk}$, $\varepsilon_{f_\rmd, nk}$, $ \varepsilon_{\theta, nk}$ can be found in equations \eqref{epss_nk} of Appendix \ref{proofCRLB} of the supplementary material.

\subsection{Target State Estimation Problem Formulation}
Suppose the target state $\bx_k$ is estimated by an unbiased estimator $\widehat \bx_k$. The PCRLB for its covariance is defined as the inverse of the Information Matrix (IM)\cite{tichavsky1998posterior}, i.e.,
\beq
	\ex [(\bx_k -\widehat \bx_k)(\bx_k -\widehat \bx_k)^\ctrans] \geq \bJ_k^{-1},
\eeq     
where $\bJ_k$ is the IM at frame $k$. Assuming that the target's process noise is sufficiently weak\footnote{The process noise in the CV model can be interpreted as an acceleration disturbance. This disturbance can be optimized or mitigated using advanced path control techniques\cite{li2003survey,wu2024tunnel}.} to be neglected, the IM at frame $k$ can be calculated via a Riccati-like recursion \cite{tichavsky1998posterior, hernandez2004comparison} as follows
\begin{align}\label{FIM}
	&\bJ_k(\bc_{1k}, \bc_{2k},\cdots, \bc_{Nk}) \nonumber\\
	&= (\bF^{-1})^\trans\bJ_{k-1}(\bc_{1(k-1)}, \bc_{2(k-1)},\cdots, \bc_{N(k-1)})\bF^{-1} \nonumber\\
	&\quad+  \sum\limits_{n\in\calN} P_{\rmd, nk}(\bc_{nk})\bH^\trans_{nk}\bR_{nk}^{-1}(\bc_{nk})\bH_{nk},
\end{align} 
where $\bJ_{k-1}$ is the IM at frame $k-1$, and $\bH_{nk}$ is the Jacobian of $\bh_n(\bx_k)$ given in \eqref{H_nk} at the top of the next page. The initial IM $\bJ_{0}$ required for recursion is typically derived from the prior target state covariance \cite{hernandez2004comparison}. Without loss of generality, $\bJ_{0}$ is initialized as a diagonal matrix  with small positive entries to signify the lack of informative prior knowledge and ensure invertibility.

\begin{figure*}
\beq\label{H_nk}
\bH_{nk} \!=\! \disps\frac{1}{r(\bx_k, \bp_{n})} \!\disps \left[\!\! {\begin{array}{*{20}{c}}
		{x_k^{} - x^n}&0&{y_k^{} - y^n}&0\\[0.5em]
		{-{\dot x}_k} - \disps\frac{\left( {x_k^{} - x^n} \right)v_{\rmd}(\bx_k, \bp_{n})}{r(\bx_k, \bp_{n})}&{x^n - x_k^{}}&-\dot y_k^{} - \disps\frac{\left( {y_k^{} - y^n} \right){v_{\rmd}(\bx_k, \bp_{n})}}{r(\bx_k, \bp_{n})}&{y^n - y_k^{}}\\[0.5em]
		\disps\frac{{y_k^{} - y^n}}{r(\bx_k, \bp_{n})}&0&\disps\frac{{x_k^{} - x^n}}{r(\bx_k, \bp_{n})}&0
\end{array}} \!\!\right]\!\!\!\!\!\!\!
\eeq
\vspace{-3mm}
\hrulefill
\end{figure*}

Notice that the IM at frame $k$ depends on the radar codes transmitted at the previous $k-1$ frames as well as the current one. The functional dependence of \eqref{FIM} on $\bc_{1\bar k}, \bc_{2\bar k},\cdots, \bc_{N\bar k},\bar k=1,\cdots,k$, highlights that a tailored design of these sequences can be exploited to enhance the accuracy of the target state estimation. According to \eqref{FIM}, radar codes can be devised for each tracking frame, provided that the codes from previous frames are fixed. Specifically, given $\bc_{1\bar k}, \bc_{2\bar k},\cdots, \bc_{N\bar k},\bar k=1,\cdots,k-1$, the goal is to optimize the codes for the $k$-th frame. In this context, the trace of the IM inverse (i.e., the trace of the PCRLB) is used as the figure of merit to minimize, that is
\beq
\!\!\!\!\!f_k(\bc_{1k}, \bc_{2k},\cdots, \bc_{Nk}) \!=\! \Tr(\bJ_k^{-1}(\bc_{1k}, \bc_{2k},\cdots, \bc_{Nk})).\!\!
\eeq

Additionally, to regulate the power budget and maintain some desired characteristics of the probing signal, energy and similarity constraints \cite{de2008code,cui2014mimo} are forced on the radar sequence: $\|\bc\|^2 = 1$, $\|\bc-\bc_0\|^2 \leq \zeta$, where $\bc_0$ denotes a reference code with unit energy $\|\bc_0\|^2\!=\!1$, and $0\leq\zeta\leq 2$ is a real parameter ruling the degree of similarity. 

Based on the above discussion, the radar code design problem at frame $k$ can be formulated as 
\beq \label{TrPCRLB0}
\calP^{'}_{k}\left\{\begin{array}{lll}
	\min\limits_{\{\bc_{nk}\}} & f_k(\bc_{1k}, \bc_{2k},\cdots, \bc_{Nk})\\
	\subto & \|\bc_{nk}\|^2 = 1,n\in\calN\\
	& \|\bc-\bc_0\|^2 \leq \zeta,n\in\calN
\end{array},
\right.
\eeq 
which can be equivalently expressed, by utilizing the unit energy restriction, as 
\beq \label{TrPCRLB}
\calP_{k}\left\{\begin{array}{lll}
	\min\limits_{\{\bc_{nk}\}} & f_k(\bc_{1k}, \bc_{2k},\cdots, \bc_{Nk})\\
	\subto & \|\bc_{nk}\|^2 = 1,n\in\calN\\
	& \Re\{\bc_0^\ctrans\bc_{nk}\}\geq \zeta_1,n\in\calN
\end{array},
\right.
\eeq 
where $\zeta_1=1-\zeta/2\geq 0$. It is evident that,  throughout the tracking process, $\calP_k$ needs to be solved at each frame. 

Note that $\calP_{k}$ is a challenging optimization problem due to its nonlinear and non-convex objective function and non-convex constraints. 
\section{Radar Network Code Synthesis for Single Target Tracking} \label{SecIII}
In this section, a solution technique is proposed to handle $\calP_{k}$. In each tracking frame, the goal is to find a solution that achieves a target state estimation accuracy better than that of the reference code $\bc_0$. The approach involves approximating the objective function with a more tractable expression, thereby simplifying the associated optimization problem. This approximated code design problem is then solved to generate candidate radar waveforms. Only if the resulting PCRLB is smaller than the value ensured by the reference codes, the waveforms on the nodes are updated; otherwise, the reference code is retained.

To construct the approximation, the real-valued formulation of $\calP_{k}$ is considered first. Letting $\widetilde \bc_{nk}=[\Re\{\bc_{nk}\}^\trans, \Im\{\bc_{nk}\}^\trans]^\trans$ and $\widetilde \bc_{0}=[\Re\{\bc_{0}\}^\trans,$ $\Im\{\bc_{0}\}^\trans]^\trans$, the quantities $ P_{\rmd, nk}( \bc_{nk})$ and the diagonal entries of $\bR_{nk}(\bc_{nk})$ can be, respectively, expressed as functions of $\widetilde \bc_{nk}$, i.e.,
\begin{align}\label{P_d1}
	\!\!\!\!\widetilde P_{\rmd, nk}(\widetilde\bc_{nk}) \!= \!Q_1\Big(\!\!\sqrt{2 \widetilde{\text{SINR}}(\widetilde\bc_{nk}, \tau_{nk}, f_{\rmd, nk}, \theta_{nk})}, \sqrt{2b_0}\Big),
\end{align}
and
\begin{subequations}\label{Rc1}
	\begin{align}
		&[\widetilde\bR_{nk}(\widetilde \bc_{nk})]_{11} = \frac{c^2}{4\varepsilon_{\tau, nk}} \displaystyle \frac{1}{\widetilde\bc_{nk}^\trans \widetilde\bM_{0, nk} \widetilde\bc_{nk}},\\
		&[\widetilde\bR_{nk}(\widetilde\bc_{nk})]_{22}  = \frac{\lambda_n^2}{4\varepsilon_{f_\rmd, nk}} \displaystyle \frac{\widetilde\bc_{nk}^\trans \widetilde\bM_{0, nk} \widetilde\bc_{nk}}{\widetilde\phi(\widetilde\bc_{nk})},\\
		&[\widetilde\bR_{nk}(\widetilde\bc_{nk})]_{33} = \frac{1}{\varepsilon_{\theta, nk}} \displaystyle \frac{1}{\widetilde\bc_{nk}^\trans \widetilde\bM_{0, nk} \widetilde\bc_{nk}},
	\end{align}
\end{subequations}
where
\begin{align}
	&\widetilde{\text{SINR}}(\widetilde\bc_{nk}, \tau_{nk}, f_{\rmd, nk}, \theta_{nk})=\varepsilon_{\alpha, nk} \widetilde\bc_{nk}^\trans\widetilde\bM_{0, nk}\widetilde\bc_{nk},\\
	&\widetilde\phi(\widetilde\bc_{nk})=\widetilde\bc_{nk}^\trans \!\widetilde\bM_{0, nk} \widetilde\bc_{nk}\widetilde\bc_{nk}^\trans \!\widetilde\bM_{2, nk} \widetilde\bc_{nk} \nonumber\\
	&\quad\quad\quad\quad-(\widetilde\bc_{nk}^\trans \widetilde\bM_{1,nk} \widetilde\bc_{nk})^2-(\widetilde\bc_{nk}^\trans \widehat\bM_{1,nk} \widetilde\bc_{nk})^2,
\end{align}
and 
\begin{subequations}
	\begin{align}
		\widetilde \bM_{0, nk} &= \left[\begin{array}{*{20}{c}}\Re\{\bM_{0, nk}\}&-\Im\{\bM_{0, nk}\}\\\Im\{\bM_{0, nk}\}&\Re\{\bM_{0, nk}\}\end{array}\right],\\
		\widetilde \bM_{1, nk} &= \left[\begin{array}{*{20}{c}}\Re\{\bM_{1, nk}\}&-\Im\{\bM_{1, nk}\}\\\Im\{\bM_{1, nk}\}&\Re\{\bM_{1, nk}\}\end{array}\right],\\
		 \widehat \bM_{1, nk} &= \left[\begin{array}{*{20}{c}}\Im\{\bM_{1, nk}\}&\Re\{\bM_{1, nk}\}\\-\Re\{\bM_{1, nk}\}&\Im\{\bM_{1, nk}\}\end{array}\right],\\
		\widetilde \bM_{2, nk} &= \left[\begin{array}{*{20}{c}}\Re\{\bM_{2, nk}\}&-\Im\{\bM_{2, nk}\}\\\Im\{\bM_{2, nk}\}&\Re\{\bM_{2, nk}\}\end{array}\right].
	\end{align}
\end{subequations}

Then, defining\footnote{Needless to say, $\widetilde P_{\rmd, nk}(\widetilde \bc_{nk})\!>\!0$ during the tracking process.} $\bS_{nk}(\widetilde\bc_{nk})\!\!=\!\!\widetilde\bR_{nk}(\widetilde\bc_{nk})/\widetilde P_{\rmd, nk}(\widetilde \bc_{nk})$, the functional dependency of the matrix $\bJ_k(\bc_{1k}, \bc_{2k},\cdots, \bc_{Nk})$ on the radar codes can be specified via $\widetilde \bc_{nk}$, namely,
 \begin{align}\label{FIM1}
	&\widetilde \bJ_k(\widetilde\bc_{1k}, \widetilde\bc_{2k},\cdots, \widetilde\bc_{Nk}) \nonumber\\
	&= (\bF^{-1})^\trans\widetilde\bJ_{k-1}\bF^{-1} +  \sum\limits_{n\in\calN} \bH^\trans_{nk}\bS_{nk}^{-1}(\widetilde \bc_{nk})\bH_{nk}.
\end{align} 
where $\widetilde\bJ_{k-1}$ is used to synthetically denote $\widetilde\bJ_{k-1}(\widetilde\bc_{1(k-1)}, \widetilde\bc_{2(k-1)},\cdots, \widetilde\bc_{N(k-1)})$. 

Therefore, $\calP_{k}$ can be equivalently expressed as
\beq \label{r_TrPCRLB}
\widetilde\calP_{k}\left\{\begin{array}{lll}
	\min\limits_{\{\widetilde \bc_{nk}\}} & \widetilde f_k(\widetilde\bc_{1k}, \widetilde\bc_{2k},\cdots, \widetilde\bc_{Nk})\\
	\subto & \|\widetilde \bc_{nk}\|^2 = 1,n\in\cal\calN\\
	& \widetilde \bc_0^\trans\widetilde\bc_{nk}\geq \zeta_1,n\in\cal\calN
\end{array},
\right.
\eeq 
where 
\beq\label{tildef}
\!\!\!\!\!\!\widetilde f_k(\widetilde\bc_{1k}, \widetilde\bc_{2k},\cdots, \widetilde\bc_{Nk})\!=\! \Tr\!(\widetilde\bJ_k^{-1}(\widetilde\bc_{1k}, \widetilde\bc_{2k},\cdots, \widetilde\bc_{Nk})).\!\!\!\!\!\!
\eeq 

Notably, only the diagonal matrix $\bS_{nk}(\widetilde \bc_{nk})$, which appears in the objective function of $\widetilde \calP_k$, depends on the radar codes $\widetilde\bc_{nk}$. The positivity of its diagonal elements in \eqref{P_d1} and \eqref{Rc1}, coupled with their dependence on the radar codes, introduces nonlinearities that complicate the optimization. To circumvent this drawback, it is reasonable to approximate these elements using well-established methodologies, such as Taylor series expansions. Let the resulting approximation be denoted by $\widehat \bS_{nk}(\widetilde\bc_{nk})$, the approximated objective function $\widehat f_k(\widetilde\bc_{1k}, \widetilde\bc_{2k},\cdots, \widetilde\bc_{Nk})$ is obtained replacing in \eqref{FIM1} $\bS_{nk}(\widetilde\bc_{nk})$ with $\widehat \bS_{nk}(\widetilde\bc_{nk})$. It is important to emphasize that the ideal cost function $\widetilde f_k(\widetilde\bc_{1k}, \widetilde\bc_{2k},\cdots, \widetilde\bc_{Nk})$ is the sum of terms strictly greater than zero. Accordingly, the adopted approximation must preserve this property, possibly by limiting the search space to prevent an ill-posed cost function. In this respect, a modified approximated objective function with a correction term is introduced as
\beq\label{limitedspace}
	\widehat f_k(\widetilde\bc_{1k}, \widetilde\bc_{2k},\cdots, \widetilde\bc_{Nk}) + \mathbb{I}_{+}\big(\widetilde\bc_{1k}, \widetilde\bc_{2k},\cdots, \widetilde\bc_{Nk}\big),
\eeq
where 
\beq
\mathbb{I}_{+}(\widetilde\bc_{1k}, \widetilde\bc_{2k},\cdots, \widetilde\bc_{Nk}) \!=\! \left\{\!\!\!\begin{array}{lll}
	0, & \min\limits_{n\in\calN,\atop l=1,2,3} [\widehat \bS_{nk}(\widetilde\bc_{nk})]_{ll} \geq \varepsilon, \\
	+\infty, &\text{otherwise}
\end{array}
\right.\nonumber
\eeq
with $\varepsilon$ being a quite small positive constant. Finally, the radar code design problem with the modified approximated objective function is considered.

Following these guidelines, the solution technique developed to solve the approximated version of $\widetilde \calP_k$ is reported in Algorithm \ref{algo1}. Inspection of the steps reveals that the two crucial tasks are the construction of the approximated objective function $\widehat f_k(\widetilde\bc_{1k},\widetilde\bc_{2k}, \cdots\!,\! \widetilde\bc_{Nk})$ and the solution of the corresponding approximated problem $\widehat \calP_{k}$. The following subsections focus on these two aspects.
\begin{algorithm}
	\setstretch{1.1}
	\caption{The approximate solution technique to solve $\widetilde \calP_{k}$.}\label{algo1} 
	\algorithmicrequire{ \multiline{$\widetilde \bc_{0}$, $\zeta$, $\bF$, $\widetilde\bJ_{k-1}$, and $\{\bH_{nk}\}_{n\in\calN}$.}}
	\algorithmicensure{ \multiline{Optimized radar codes $\widetilde\bc_{1k},\widetilde\bc_{2k},\cdots\!,\! \widetilde\bc_{Nk}$ and the\\ achieved IM $\widetilde\bJ_{k}(\widetilde\bc_{1k}, \widetilde\bc_{2k},\cdots, \widetilde\bc_{Nk})$.}} 
	\begin{algorithmic}[1]
		\State \multiline{ Determine an approximation $\!\widehat \bS_{nk}\!(\widetilde\bc_{nk})\!$ to $\!\bS_{nk}\!(\widetilde\bc_{nk})$;}
		\State \multiline{Construct the corresponding approximated objective function $\widehat f_k(\widetilde\bc_{1k},\widetilde\bc_{2k},\cdots\!,\! \widetilde\bc_{Nk})$ replacing $\bS_{nk}(\widetilde\bc_{nk})$ in 
			\eqref{FIM1} with $\widehat \bS_{nk}(\widetilde\bc_{nk})$;}
		\State Solve the following problem to get the candidates $\widetilde \bc_{nk}^\star, n\in\calN$,
		\beq
		\!\!\!\!\!\!\widehat \calP_k\left\{\!\!\!\!\begin{array}{lll}
			\min\limits_{\{\widetilde\bc_{nk}\}} & \widehat f_k(\widetilde\bc_{1k},\widetilde\bc_{2k},\cdots\!,\! \widetilde\bc_{Nk})\!+\!\mathbb{I}_{+}\big(\widetilde\bc_{1k}, \widetilde\bc_{2k},\cdots\!,\! \widetilde\bc_{Nk}\big)\\
			\subto & \|\widetilde\bc_{nk}\|^2 = 1,n\in\calN\\
			& \widetilde \bc_0^\trans\widetilde\bc_{nk}\geq \zeta_1,n\in\calN;
		\end{array}
		\right.\nonumber
		\eeq
		\If{$\widetilde f_k(\widetilde\bc_{1k}^\star,\widetilde\bc_{2k}^\star,\cdots\!,\! \widetilde\bc_{Nk}^\star)< \widetilde f_k(\widetilde\bc_{0},\widetilde\bc_{0},\cdots\!,\! \widetilde\bc_{0})$}
		\State Let $\widetilde\bc_{nk}\!=\!\widetilde\bc_{nk}^\star, n\in\calN$;
		\Else 
		\State Let $\widetilde\bc_{nk}\!=\!\widetilde\bc_{0}, n\in\calN$;
		\EndIf
		\State Compute $\widetilde\bJ_{k}(\widetilde\bc_{1k}, \widetilde\bc_{2k},\cdots, \widetilde\bc_{Nk})$ defined in \eqref{FIM1}.
	\end{algorithmic}
\end{algorithm}
\subsection{Approximation of ${\widetilde f_k}(\widetilde {\boldsymbol c}_{1k},\widetilde{\boldsymbol c}_{2k},\cdots\!,\! \widetilde{\boldsymbol c}_{Nk})$} 
The core idea underlying the approximation ${\widetilde f_k}(\widetilde {\boldsymbol c}_{1k},\widetilde{\boldsymbol c}_{2k},\cdots\!,\! \widetilde{\boldsymbol c}_{Nk})$, as already highlighted, is to suitably describe the diagonal elements of $\bS_{nk}(\widetilde\bc_{nk})$ leveraging a second order Taylor expansion. The following lemma provides details in this regard.

\begin{lemma}\label{prop2}
	The second order Taylor expansion of the $l$-th diagonal elements of $\bS_{nk}(\widetilde\bc_{nk})$ in the neighborhood of the reference code $\widetilde \bc_{0}$ is given by 
	\beq\label{hatRc}
		[\widehat \bS_{nk}(\widetilde\bc_{nk})]_{ll} =  \widetilde \bc_{nk}^\trans\bA_{l, nk} \widetilde \bc_{nk} + \ba_{l, nk}^\trans 	\widetilde \bc_{nk} + a_{l, nk},
	\eeq
	where $\bA_{l, nk}$, $\ba_{l, nk}$, and $a_{l, nk}$ are reported in Appendix \ref{proof2ndexpansion} of the supplementary material.
\end{lemma}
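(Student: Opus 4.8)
The plan is to note that, for each fixed index $l\in\{1,2,3\}$, the scalar map $g_l(\widetilde\bc_{nk})=[\bS_{nk}(\widetilde\bc_{nk})]_{ll}$ is smooth in a neighborhood of $\widetilde\bc_0$ (its denominators being strictly positive there), so its second order Taylor polynomial about $\widetilde\bc_0$,
\[
g_l(\widetilde\bc_{nk})\approx g_l(\widetilde\bc_0)+\nabla g_l(\widetilde\bc_0)^\trans(\widetilde\bc_{nk}-\widetilde\bc_0)+\tfrac12(\widetilde\bc_{nk}-\widetilde\bc_0)^\trans\nabla^2 g_l(\widetilde\bc_0)(\widetilde\bc_{nk}-\widetilde\bc_0),
\]
is automatically a quadratic function of $\widetilde\bc_{nk}$ and hence takes exactly the form \eqref{hatRc}. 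The whole content of the lemma is then the explicit identification of the coefficients: expanding the right-hand side and regrouping by degree gives $\bA_{l,nk}=\tfrac12\nabla^2 g_l(\widetilde\bc_0)$, $\ba_{l,nk}=\nabla g_l(\widetilde\bc_0)-\nabla^2 g_l(\widetilde\bc_0)\,\widetilde\bc_0$ (using the symmetry of the Hessian), and $a_{l,nk}=g_l(\widetilde\bc_0)-\nabla g_l(\widetilde\bc_0)^\trans\widetilde\bc_0+\tfrac12\widetilde\bc_0^\trans\nabla^2 g_l(\widetilde\bc_0)\widetilde\bc_0$. Symmetrizing the Hessian also guarantees that $\bA_{l,nk}$ is symmetric, as required by the subsequent block-MM treatment.

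Thus the real task is to evaluate $\nabla g_l$ and $\nabla^2 g_l$ at $\widetilde\bc_0$ from the closed forms \eqref{Rc1} and \eqref{P_d1}. Since $\bS_{nk}=\widetilde\bR_{nk}/\widetilde P_{\rmd,nk}$, I would build each $g_l$ from a small dictionary of elementary quadratic forms — $q_0(\widetilde\bc)=\widetilde\bc^\trans\widetilde\bM_{0,nk}\widetilde\bc$, $q_2(\widetilde\bc)=\widetilde\bc^\trans\widetilde\bM_{2,nk}\widetilde\bc$, and the two pieces $\widetilde\bc^\trans\widetilde\bM_{1,nk}\widetilde\bc$ and $\widetilde\bc^\trans\widehat\bM_{1,nk}\widetilde\bc$ entering $\widetilde\phi$ — together with the detection factor $\widetilde P_{\rmd,nk}$. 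For every such quadratic form $q(\widetilde\bc)=\widetilde\bc^\trans\bM\widetilde\bc$ one has $\nabla q=(\bM+\bM^\trans)\widetilde\bc$ and $\nabla^2 q=\bM+\bM^\trans$, so after symmetrizing the $\widetilde\bM$ blocks all gradients and Hessians of the polynomial parts are immediate, and $\widetilde\phi$ as well as the ratios follow by repeated use of the product and quotient rules. The factor $\widetilde P_{\rmd,nk}$ is handled by the chain rule through $\widetilde{\text{SINR}}=\varepsilon_{\alpha,nk}\,q_0(\widetilde\bc)$, invoking the known closed-form first and second derivatives of the first-order Marcum $Q$-function with respect to its first argument to obtain $\partial\widetilde P_{\rmd,nk}/\partial\widetilde{\text{SINR}}$ and $\partial^2\widetilde P_{\rmd,nk}/\partial\widetilde{\text{SINR}}^2$ at the operating point.

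I expect the case $l=2$ to be the main obstacle. There $[\widetilde\bR_{nk}]_{22}$ carries the quartic denominator $\widetilde\phi(\widetilde\bc)=q_0 q_2-(\widetilde\bc^\trans\widetilde\bM_{1,nk}\widetilde\bc)^2-(\widetilde\bc^\trans\widehat\bM_{1,nk}\widetilde\bc)^2$, so $g_2$ is a ratio of a degree-two numerator over a degree-four denominator, divided in turn by the transcendental $\widetilde P_{\rmd,nk}$; differentiating this twice and evaluating at $\widetilde\bc_0$ yields a long collection of product/quotient contributions that must be carefully gathered and symmetrized into a single $\bA_{2,nk}$ and $\ba_{2,nk}$. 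By contrast $l=1$ and $l=3$ are structurally identical and much lighter, since there the code dependence reduces, up to a constant scale, to $1/\!\big(q_0(\widetilde\bc)\,\widetilde P_{\rmd,nk}(\widetilde\bc)\big)$. Once these gradients and Hessians are assembled, the explicit $\bA_{l,nk}$, $\ba_{l,nk}$, and $a_{l,nk}$ recorded in Appendix \ref{proof2ndexpansion} follow directly from the regrouping described in the first paragraph.
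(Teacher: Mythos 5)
Your proposal is correct and follows essentially the same route as the paper's own derivation: since the lemma simply asserts that the second-order Taylor polynomial of $[\bS_{nk}(\widetilde\bc_{nk})]_{ll}=[\widetilde\bR_{nk}(\widetilde\bc_{nk})]_{ll}/\widetilde P_{\rmd,nk}(\widetilde\bc_{nk})$ about $\widetilde\bc_0$ is a quadratic in $\widetilde\bc_{nk}$, the only possible proof is to compute the gradient and (symmetrized) Hessian of each diagonal entry via the quotient, product, and chain rules --- including the Marcum-$Q$ factor through $\widetilde{\text{SINR}}$ and the quartic $\widetilde\phi$ for $l=2$ --- and regroup by degree exactly as you do, which reproduces the coefficient identifications $\bA_{l,nk}=\tfrac12\nabla^2 g_l(\widetilde\bc_0)$, $\ba_{l,nk}=\nabla g_l(\widetilde\bc_0)-\nabla^2 g_l(\widetilde\bc_0)\widetilde\bc_0$, and $a_{l,nk}=g_l(\widetilde\bc_0)-\nabla g_l(\widetilde\bc_0)^\trans\widetilde\bc_0+\tfrac12\widetilde\bc_0^\trans\nabla^2 g_l(\widetilde\bc_0)\widetilde\bc_0$ recorded in the supplementary appendix.
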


By approximating $\bS_{nk}(\widetilde\bc_{nk})$ in $\widetilde f_k(\widetilde\bc_{1k}, \widetilde\bc_{2k},\cdots, \widetilde\bc_{Nk})$ via the expansion formula in \eqref{hatRc}, the resulting function  $\widehat f_k(\widetilde\bc_{1k},\widetilde\bc_{2k},\cdots\!,\! \widetilde\bc_{Nk})$ can be expressed as 
\beq\label{approx_fk}
\!\!\!\!\!\!\widehat f_k(\widetilde\bc_{1k},\widetilde\bc_{2k},\cdots\!,\! \widetilde\bc_{Nk}) \!=\! \Tr(\widehat\bJ_k^{-1}(\widetilde\bc_{1k}, \widetilde\bc_{2k},\cdots, \widetilde\bc_{Nk})),\!\!\!\!\!\!
\eeq
where 
\begin{align}\label{approx_Jk}
	&\widehat \bJ_k(\widetilde\bc_{1k}, \widetilde\bc_{2k},\cdots, \widetilde\bc_{Nk}) \nonumber\\
	&= (\bF^{-1})^\trans\widetilde\bJ_{k-1}\bF^{-1} +  \sum\limits_{n\in\calN} \bH^\trans_{nk}\widehat\bS_{nk}^{-1}(\widetilde \bc_{nk})\bH_{nk}.
\end{align} 
\subsection{Solution to Problem $\widehat \calP_k$}
Before proceeding further, let us focus on an equivalent expression of $\widehat \calP_k$, i.e., 
\beq\label{Pk1}
\bar \calP_k\left\{\!\!\!\!\begin{array}{lll}
	\min\limits_{\{\widetilde\bc_{nk}\}} & \widehat f_k(\widetilde\bc_{1k},\widetilde\bc_{2k},\cdots\!,\! \widetilde\bc_{Nk})\\
	\subto & \|\widetilde\bc_{nk}\|^2 = 1,n\in\calN\\
	& \widetilde \bc_0^\trans\widetilde\bc_{nk}\geq \zeta_1,n\in\cal\calN\\
	&[\widehat \bS_{nk}(\widetilde\bc_{nk})]_{ll}\geq \varepsilon,l=1,2,3,n\in\calN
\end{array}.
\right.
\eeq

Inspired by the block–MM framework\cite{razaviyayn2013unified, aubry2018new}, which combines the Block Coordinate Descent (BCD) paradigm with the MM framework, a novel solution technique for Problem \eqref{Pk1} is developed. Specifically, along the BCD iterations, a surrogate function for the restriction of the original objective function with respect to each block is optimized. In order to proceed, let $\{\widetilde \bc_{nk}^i\}_{n\in\calN}$ be the optimized codes up to the $i$-th iteration, and let $\widetilde\bc_{pk},p\in\calN$ be the block to be updated at $(i+1)$-th iteration. The restriction of the objective function with respect to the $p$-th block is given by
\beq\label{hat_fpk}
\widehat f_{pk}(\widetilde\bc_{pk}) =\Tr\!\Big(\big(\bB_{pk}^i+\bH_{pk}^\trans\widehat \bS_{pk}^{-1}(\widetilde\bc_{pk})\bH_{pk}\big)^{-1}\Big),
\eeq
where
\begin{align}
\bB_{pk}^i =& (\bF^{-1})^\trans\widetilde\bJ_{k-1}(\widetilde\bc_{1(k-1)},\widetilde\bc_{2(k-1)},\cdots\!,\! \widetilde\bc_{N(k-1)})\bF^{-1} \nonumber\\
&+ \sum\limits_{\substack{ n\in\calN,\\n\neq p}}\bH^\trans_{nk}\widehat\bS_{nk}^{-1}(\widetilde\bc_{nk}^{i})\bH_{nk}.
\end{align}

It is worth noting that $\bB_{pk}^i\in\bbS_{++}^{M}$, since $\widehat\bS_{nk}(\widetilde\bc_{nk}^{i})\in\bbS_{++}^{M}$ and the positive definite initial IM $\bJ_0$.

Following the principles of the block-MM framework, a surrogate function for $\widehat f_{pk}(\widetilde\bc_{pk})$ is now derived. To this end, the following preliminary results are necessary.

\begin{proposition}\label{prop_fRconcave}
 	Let $\bB \in\bbS_{++}^M$, the function $h(\bR)=\Tr\!\big(\big(\bB+\bH^\trans\bR^{-1}\bH\big)^{-1}\big)$ for $\bR\in \bbS_{++}^M$ is a concave function.  
\end{proposition}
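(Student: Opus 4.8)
The plan is to eliminate the nested inversion so that $\bR$ enters through a single matrix inverse, and then to invoke a standard convexity fact about the trace of an inverse. The key reformulation is the Woodbury (Sherman--Morrison--Woodbury) identity applied to $\bB+\bH^\trans\bR^{-1}\bH$. Since $\bB$ is invertible and $\bR$ is positive definite, this yields
\[
\big(\bB+\bH^\trans\bR^{-1}\bH\big)^{-1}=\bB^{-1}-\bG\,(\bR+\bC)^{-1}\bG^\trans,
\]
where $\bC=\bH\bB^{-1}\bH^\trans\succeq 0$ and $\bG=\bB^{-1}\bH^\trans$ are constant (independent of $\bR$), and $\bR+\bC$ is positive definite, hence always invertible. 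The whole point is that $\bR$ now appears through the single inverse $(\bR+\bC)^{-1}$.

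Taking the trace and using its cyclic invariance, I would rewrite
\[
h(\bR)=\Tr(\bB^{-1})-\Tr\!\big((\bR+\bC)^{-1}\bD\big),\qquad \bD:=\bG^\trans\bG=\bH\bB^{-2}\bH^\trans\succeq 0 .
\]
The first term is constant, so it suffices to prove that $\bR\mapsto\Tr\big((\bR+\bC)^{-1}\bD\big)$ is convex; then $h$ is a constant minus a convex function, hence concave. Because $\bR\mapsto\bR+\bC$ is affine and precomposition with an affine map preserves convexity, everything reduces to the single claim that $\bX\mapsto\Tr(\bX^{-1}\bD)$ is convex on $\bbS_{++}$ for any fixed $\bD\succeq 0$.

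To establish this claim I would factor $\bD=\sum_i \bd_i\bd_i^\trans$ (e.g.\ from its eigendecomposition), so that $\Tr(\bX^{-1}\bD)=\sum_i \bd_i^\trans\bX^{-1}\bd_i$, and prove convexity of each scalar form $\bX\mapsto \bd^\trans\bX^{-1}\bd$. The cleanest device is the variational identity
\[
\bd^\trans\bX^{-1}\bd=\sup_{\bu}\big(2\,\bd^\trans\bu-\bu^\trans\bX\bu\big),
\]
whose maximiser is $\bu=\bX^{-1}\bd$; this writes the quadratic form as a pointwise supremum of functions that are affine in $\bX$, which is therefore convex. Summing over $i$ gives convexity of $\Tr(\bX^{-1}\bD)$, completing the argument.

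The main obstacle is recognising why a direct Loewner-order argument is tempting but fails: $\bR\mapsto\bR^{-1}$ is operator convex whereas $\bY\mapsto\bY^{-1}$ is operator monotone decreasing, so composing them (as the raw expression $\bB+\bH^\trans\bR^{-1}\bH$ followed by an outer inverse requires) produces Loewner inequalities pointing in conflicting directions and cannot certify concavity of $h$. The Woodbury step is exactly what circumvents this, collapsing the double inversion into a single one, after which the reasoning is routine. Equivalently, one could show that the matrix-valued map $\bR\mapsto \bB^{-1}-\bG(\bR+\bC)^{-1}\bG^\trans$ is operator concave and then apply the (order-preserving, linear) trace, but the scalar route above is the shortest.
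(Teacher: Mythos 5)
Your proof is correct. The Woodbury step is applied with the right identification ($\bC=\bH\bB^{-1}\bH^\trans\succeq 0$, $\bG=\bB^{-1}\bH^\trans$, using the symmetry of $\bB$), the cyclic-trace reduction to $\Tr\big((\bR+\bC)^{-1}\bD\big)$ with $\bD=\bH\bB^{-2}\bH^\trans\succeq 0$ is sound, and the convexity of $\bX\mapsto\Tr(\bX^{-1}\bD)$ on $\bbS_{++}$ via the decomposition $\bD=\sum_i\bd_i\bd_i^\trans$ and the variational identity $\bd^\trans\bX^{-1}\bd=\sup_{\bu}\big(2\bd^\trans\bu-\bu^\trans\bX\bu\big)$ (a pointwise supremum of functions affine in $\bX$) is a complete and rigorous finish; composition with the affine shift $\bR\mapsto\bR+\bC$ then gives concavity of $h$. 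Your side remark is also on point: a naive Loewner-order composition argument does fail here, because $\bY\mapsto\Tr\big((\bB+\bY)^{-1}\big)$ is convex (not concave) and nonincreasing while $\bR\mapsto\bH^\trans\bR^{-1}\bH$ is operator convex, so the standard composition rules are inconclusive, and some device such as Woodbury (or, alternatively, writing $\bd^\trans(\bB+\bH^\trans\bR^{-1}\bH)^{-1}\bd=\sup_{\bx}\big(2\bd^\trans\bx-\bx^\trans\bB\bx-(\bH\bx)^\trans\bR^{-1}(\bH\bx)\big)$ and invoking the joint convexity of the matrix fractional function together with partial maximization) is genuinely needed. Either of these routes is standard for this statement; yours is self-contained and I see no gap.
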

 \begin{proof}
	See Appendix \ref{proof_prop_fRconcave} of the supplementary material.
\end{proof}
\begin{lemma}[\!\!\cite{song2015optimization}]\label{lemma:1}
	For matrices $\bA,\bL,\bU\in\bbS^{M\times M}$ such that $ \bA\preceq \bU$ and $ \bA\succeq \bL$, the quadratic form $\bx^\trans \bA \bx$ can upperbounded and lowerbounded for any $\bx_0\in\bbR^N\setminus \{\bzero\}$
	\beq
		\bx^\trans \bA \bx \!\leq \!\bx^\trans \bU \bx \!+  \!2\bx^\trans (\bA \!-\! \bU) \bx_0   \!+ \!\bx_0^\trans (\bU \!-\! \bA) \bx_0\label{lemma:1upper}
	\eeq
	and 
	\beq
		\bx^\trans \bA \bx \!\geq \!\bx^\trans \bL \bx \!+  \!2\bx^\trans (\bA \!-\! \bL) \bx_0   \!+ \!\bx_0^\trans (\bL \!-\! \bA) \bx_0\label{lemma:1lower}
	\eeq
	where the equality is achieved at $\bx=\bx_0$.
\end{lemma}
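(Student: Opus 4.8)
The plan is to prove both bounds by the same completion-of-square argument, reducing each claimed inequality to the nonnegativity of a single quadratic form in the shifted variable $\bx-\bx_0$. Conceptually, the right-hand sides of \eqref{lemma:1upper} and \eqref{lemma:1lower} are nothing but tangent-type quadratic surrogates of $\bx^\trans\bA\bx$ anchored at $\bx_0$: one takes the first-order linearization of $\bx^\trans\bA\bx$ at $\bx_0$ and adds the curvature correction $(\bx-\bx_0)^\trans(\bU-\bA)(\bx-\bx_0)$ (resp. subtracts $(\bx-\bx_0)^\trans(\bA-\bL)(\bx-\bx_0)$), producing a majorizer (resp. minorizer) that is exact at $\bx_0$. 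The task is simply to make this precise.

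First I would establish \eqref{lemma:1upper}. Moving $\bx^\trans\bA\bx$ to the right, define the residual
\[
D(\bx) = \bx^\trans\bU\bx + 2\bx^\trans(\bA-\bU)\bx_0 + \bx_0^\trans(\bU-\bA)\bx_0 - \bx^\trans\bA\bx .
\]
Collecting the purely quadratic term $\bx^\trans(\bU-\bA)\bx$, the cross term $-2\bx^\trans(\bU-\bA)\bx_0$, and the constant $\bx_0^\trans(\bU-\bA)\bx_0$, and exploiting the symmetry of $\bU-\bA$, I expect this to collapse exactly to the perfect square
\[
D(\bx) = (\bx-\bx_0)^\trans(\bU-\bA)(\bx-\bx_0).
\]
Since the hypothesis $\bA\preceq\bU$ gives $\bU-\bA\succeq\bzero$, the residual is nonnegative for every $\bx$, which is precisely \eqref{lemma:1upper}; and it vanishes at $\bx=\bx_0$, yielding the asserted equality there.

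The lower bound \eqref{lemma:1lower} follows by the mirror-image computation: the residual $E(\bx)=\bx^\trans\bA\bx-\bx^\trans\bL\bx-2\bx^\trans(\bA-\bL)\bx_0-\bx_0^\trans(\bL-\bA)\bx_0$ should factor as $(\bx-\bx_0)^\trans(\bA-\bL)(\bx-\bx_0)$, which is nonnegative because $\bL\preceq\bA$ gives $\bA-\bL\succeq\bzero$; equality again holds at $\bx=\bx_0$.

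I do not anticipate a genuine obstacle here, since once the perfect-square factorization is identified the Loewner hypotheses finish both statements immediately. The only point requiring attention is careful sign bookkeeping in the expansion — in particular noting that $-\bx_0^\trans(\bL-\bA)\bx_0=+\bx_0^\trans(\bA-\bL)\bx_0$, so that the constant term is exactly the one needed to complete the square and no leftover remains. (The stated restriction $\bx_0\neq\bzero$ is not used in the inequalities themselves; it merely reflects the intended application, where $\bx_0$ is a nonzero current iterate in the surrogate construction.)
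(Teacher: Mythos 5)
Your proof is correct. The paper itself gives no proof of this lemma --- it simply cites \cite{song2015optimization} --- and your completion-of-square argument, reducing each bound to the nonnegativity of $(\bx-\bx_0)^\trans(\bU-\bA)(\bx-\bx_0)$ and $(\bx-\bx_0)^\trans(\bA-\bL)(\bx-\bx_0)$ under the Loewner hypotheses, is exactly the standard derivation found in that reference. Your parenthetical observation that the restriction $\bx_0\neq\bzero$ is not actually needed for the inequalities is also accurate.
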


To proceed further, let
\beq\label{hpk:R}
h_{pk}(\bR_{pk})=\Tr\!\Big(\big(\bB_{pk}^i\!+\!\bH_{pk}^\trans \bR_{pk}^{-1}\bH_{pk}\big)^{-1}\Big), 
\eeq
which is concave according to Proposition \ref{prop_fRconcave}. Based on the first order condition for concave functions at some given $\bR_{pk}^i$, the following inequalities hold: 
\begin{align}\label{fR_maj1}
	&h_{pk}(\bR_{pk}) \nonumber\\
	&\leq\! h_{pk}(\bR_{pk}^{i})+ \Tr\Big(\Big(\frac{\partial h_{pk}(\bR_{pk})}{\partial \bR_{pk}}\Big)^\trans\Big|_{\bR_{pk}\!=\bR_{pk}^{i}}\!\big(\bR_{pk}-\bR_{pk}^{i}\big)\Big) \nonumber\\
	 &=\! h_{pk}(\bR_{pk}^{i})\!+\! \Tr\Big((\bR_{pk}^{i})^{-\!1}\bH_{pk}(\bB_{pk}^i\!+\!\bH_{pk}^\trans(\bR_{pk}^{i})^{-\!1}\bH_{pk}\big)^{-\!2}\nonumber\nonumber\\
	 &\quad \cdot \bH_{pk}^\trans(\bR_{pk}^{i})^{-\!1}(\bR_{pk}\!-\!\bR_{pk}^{i})\Big)\nonumber \\
	 &=\! h_{pk}(\bR_{pk}^{i})\!+\!\Tr(\widehat \bD_{pk}^i(\bR_{pk}-\bR_{pk}^i))\nonumber\\
	 &=\! h_{pk}(\bR_{pk}^{i})\!+\!\sum\limits_{l=1}^{3} [\widehat\bD_{pk}^i]_{ll}[\bR_{pk}]_{ll} \!-\! \Tr(\widehat\bD_{pk}^i\bR_{pk}^i),
\end{align}
where 
\begin{align}\label{D_pki}
&\widehat\bD_{pk}^i\nonumber \\
&\;=\!(\bR_{pk}^i)^{-\!1}\bH_{pk}(\bB_{pk}^i\!+\!\bH_{pk}^\trans(\bR_{pk}^i)^{-\!1}\bH_{pk}\big)^{-\!2}\bH_{pk}^\trans(\bR_{pk}^i)^{-\!1}.
\end{align}
Now, replacing $\bR_{pk}=\widehat \bS_{pk}(\widetilde\bc_{pk})$ in \eqref{hpk:R} yields $\widehat f_{pk}(\widetilde\bc_{pk})=h_{pk}(\bR_{pk})$ on the basis of \eqref{hat_fpk}. By further substituting $\bR_{pk}^i=\widehat \bS_{pk}(\widetilde\bc_{pk}^i)$ into \eqref{fR_maj1}-\eqref{D_pki}, where $\widetilde \bc^i_{pk}$ is the solution at the $i$-th iteration, a majorizing function for $\widehat f_{pk}(\widetilde\bc_{pk})$ at $\widetilde \bc_{pk}=\widetilde \bc^i_{pk}$ is obtained as
\beq\label{f_maj1}
	\widehat f_{pk}(\widetilde\bc_{pk}) \leq \widetilde \bc_{pk}^\trans\bD_{pk}^i \widetilde \bc_{pk} + (\bd_{ pk}^i)^\trans \widetilde \bc_{pk} +  d_{pk}^i,
\eeq
where $\bD_{pk}^i \!\!=\!\!\sum_{l=1}^3[\widehat\bD_{pk}^i]_{ll}\bA_{l, pk}$, $\bd_{pk}^i \!\!=\!\!\sum_{l=1}^3[\widehat\bD_{pk}^i]_{ll}\ba_{l, pk}$, and $d_{pk}^i=\widehat f_{pk}(\widetilde\bc_{pk}^i)\!-\! \Tr(\widehat\bD_{pk}^i\bR_{pk}^i)+\sum_{l=1}^3[\widehat\bD_{pk}^i]_{ll}a_{l, pk}$.

Next, applying \eqref{lemma:1upper} of Lemma \ref{lemma:1} to the quadratic term $\widetilde \bc_{pk}^\trans\bD_{pk}^i \widetilde \bc_{pk}$ in \eqref{f_maj1}, and exploiting the unit energy constraint $\|\widetilde \bc_{pk}\|^2=1$, the objective function can be further tightly upperbounded as
\begin{align}
	\widehat f_{pk}(\widetilde\bc_{pk}) \leq (\bg_{pk}^i)^\trans\widetilde \bc_{pk}+g_{pk}^i,
\end{align}
where 
\begin{align}
	\bg_{pk}^i =& 2(\bD_{pk}^i-\lambda_{\max}(\bD_{pk}^i)\bI_{2M})\widetilde \bc_{pk}^i + \bd_{pk}^i,\label{ggpk}\\
	g_{pk}^i=&2\lambda_{\max}(\bD_{pk}^i) - (\widetilde \bc_{pk}^i)^\trans\bD_{pk}^i\widetilde \bc_{pk}^i +  \widetilde d_{pk, t}.\label{gpk}
\end{align}

As a consequence, $\widetilde\bc_{pk}^{i+1}$ can be obtained by solving the surrogate problem
\beq\label{Pk_maj1}
\!\!\!\left\{\!\!\!\begin{array}{lll}
	\min\limits_{\widetilde\bc_{pk}} & \!\!\!(\bg_{pk}^i)^\trans\widetilde \bc_{pk}+g_{pk}^i\\
	\subto &\!\!\! \|\widetilde\bc_{pk}\|^2 = 1\\
	&\!\!\! \widetilde \bc_0^\trans\widetilde\bc_{pk}\geq \zeta_1\\
	&\!\!\! \widetilde \bc_{pk}^\trans\bA_{l, pk} \widetilde \bc_{pk} \!+\! \ba_{l, pk}^\trans 	\widetilde \bc_{pk} \!+\! a_{l, pk}\!\geq\! \varepsilon,l=1,2,3\!\!\!\!
\end{array}.
\right.
\eeq

Problem \eqref{Pk_maj1} remains non-convex and NP-hard in general due to the energy requirement and the last non-convex quadratic constraints. However, the latter can be linearized into a convex form by invoking Lemma \ref{lemma:1} along with the unit energy condition. Specifically, using \eqref{lemma:1lower} from Lemma \ref{lemma:1} and $\|\bc_{pk}\|^2 = 1$, a tight lowerbound for $\widetilde \bc_{pk}^\trans\bA_{l, pk} \widetilde \bc_{pk}$ is 
\begin{align}\label{cAc_approx}
	\widetilde \bc_{pk}^\trans\bA_{l, pk} \widetilde \bc_{pk} \geq&    2\widetilde\bc_{pk}^\trans (\bA_{l,pk}-\lambda_{\min}(\bA_{l,pk})\bI_{2M}) \widetilde\bc_{pk}^i\nonumber\\
	&2\lambda_{\min}(\bA_{l,pk})-(\widetilde\bc_{pk}^i)^\trans\bA_{l,pk}\widetilde\bc_{pk}^i,
\end{align}
where the equality holds when $\widetilde \bc_{pk}=\widetilde \bc_{pk}^i$. Thus, the last quadratic constraints in Problem \eqref{Pk_maj1} can be approximated by the following convex constraints
\beq
(\bh_{l, pk}^i)^\trans\widetilde\bc_{pk}+  h_{l,pk}^i\geq\varepsilon,l=1,2,3,
\eeq
where 
\begin{align}
	\bh_{l, pk}^i=&2(\bA_{l, pk}-\lambda_{\min}(\bA_{l, pk})\bI_{2M})\widetilde\bc_{pk}^i+\ba_{l, pk},\label{hhpk}\\
	 h_{l,pk}^i=&2\lambda_{\min}(\bA_{l, pk})-(\widetilde\bc_{pk}^i)^\trans\bA_{l, pk}\widetilde\bc_{pk}^i+a_{l, pk}.\label{hpk}
\end{align} 

Therefore, Problem \eqref{Pk_maj1} can be cast as
\beq\label{Pk_sca}
 \!\!\!\left\{\!\!\!\begin{array}{lll}
 	\min\limits_{\widetilde\bc_{pk}} & \!\!\!(\bg_{pk}^i)^\trans\widetilde \bc_{pk}+g_{pk}^i\\
 	\subto &\!\!\! \|\widetilde\bc_{pk}\|^2 = 1\\
 	&\!\!\! \widetilde \bc_0^\trans\widetilde\bc_{pk}\geq \zeta_1\\
 	&\!\!\! (\bh_{l, pk}^i)^\trans\widetilde\bc_{pk}+  h_{l,pk}^i\geq\varepsilon,l=1,2,3
 \end{array}.
 \right.
\eeq

Problem \eqref{Pk_sca} is feasible\footnote{According to the tight lowerbound argument of Lemma \ref{lemma:1}, $\widetilde \bc_{pk}^i$ is a feasible point to Problem \eqref{Pk_sca}.} and its optimal solution can be obtained applying the following lemma, which reveals the hidden convexity of Problem \eqref{Pk_sca}.

\begin{lemma}\label{lemma_Pk_sca_realx}
	The optimal solution to Problem \eqref{Pk_sca} can be constructed from the optimal solution to Problem \eqref{Pk_sca_relax}.
	\beq\label{Pk_sca_relax}
	\!\!\!\left\{\!\!\!\begin{array}{lll}
		\min\limits_{\widetilde\bc_{pk}} & \!\!\!(\bg_{pk}^i)^\trans\widetilde \bc_{pk}\\
		\subto &\!\!\! \|\widetilde\bc_{pk}\|^2 \leq 1\\
		&\!\!\! \widetilde \bc_0^\trans\widetilde\bc_{pk}\geq \zeta_1\\
		&\!\!\! (\bh_{l, pk}^i)^\trans\widetilde\bc_{pk}+ h_{l,pk}^i\geq\varepsilon,l=1,2,3
	\end{array}.
	\right.
	\eeq
\end{lemma}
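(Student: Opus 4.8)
The plan is to exploit the convexity of the relaxed problem \eqref{Pk_sca_relax} together with the tightness of the norm relaxation, so that any minimizer of \eqref{Pk_sca_relax} can be turned into a minimizer of \eqref{Pk_sca} lying on the unit sphere.

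First I would establish the setting. Problem \eqref{Pk_sca_relax} is convex: its objective is linear and its feasible set $\calF = \{\widetilde\bc_{pk}\in\bbR^{2M} : \|\widetilde\bc_{pk}\|^2\leq 1,\ \widetilde\bc_0^\trans\widetilde\bc_{pk}\geq\zeta_1,\ (\bh_{l,pk}^i)^\trans\widetilde\bc_{pk}+h_{l,pk}^i\geq\varepsilon,\ l=1,2,3\}$ is the intersection of a closed Euclidean ball with finitely many half-spaces, hence closed, bounded, and convex. Since $\calF$ is nonempty (by the tight lower bound property of Lemma \ref{lemma:1}, $\widetilde\bc_{pk}^i$ is feasible for \eqref{Pk_sca}, and $\|\widetilde\bc_{pk}^i\|^2=1\leq 1$ places it in $\calF$), the Weierstrass theorem guarantees that an optimal solution $\widetilde\bc_{pk}^\star$ of \eqref{Pk_sca_relax} exists. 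Moreover every point feasible for \eqref{Pk_sca} belongs to $\calF$, and the two objectives differ only by the additive constant $g_{pk}^i$; hence the optimal value of \eqref{Pk_sca_relax} is bounded above by the optimal value of \eqref{Pk_sca} minus $g_{pk}^i$. Consequently, it suffices to construct from $\widetilde\bc_{pk}^\star$ a point $\widehat\bc$ that is feasible for \eqref{Pk_sca} and satisfies $(\bg_{pk}^i)^\trans\widehat\bc = (\bg_{pk}^i)^\trans\widetilde\bc_{pk}^\star$, since such a $\widehat\bc$ would attain the lower bound and therefore be optimal for \eqref{Pk_sca}.

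Next I would distinguish two cases. If $\|\widetilde\bc_{pk}^\star\|^2=1$, then $\widehat\bc=\widetilde\bc_{pk}^\star$ is already feasible for \eqref{Pk_sca} with the same objective, and we are done. If instead $\|\widetilde\bc_{pk}^\star\|^2<1$, the idea is to push $\widetilde\bc_{pk}^\star$ outward along a direction that leaves the objective and all linear constraints untouched. To this end I would choose a nonzero vector $\bu\in\bbR^{2M}$ orthogonal to the five vectors $\bg_{pk}^i$, $\widetilde\bc_0$, $\bh_{1,pk}^i$, $\bh_{2,pk}^i$, $\bh_{3,pk}^i$; such $\bu$ exists because the orthogonal complement of their span has dimension at least $2M-5>0$ for any practical pulse number $M\geq 3$. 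Setting $\widehat\bc=\widetilde\bc_{pk}^\star+t\bu$, orthogonality makes $(\bg_{pk}^i)^\trans\widehat\bc$, $\widetilde\bc_0^\trans\widehat\bc$ and $(\bh_{l,pk}^i)^\trans\widehat\bc$ independent of $t$, so the objective value, the similarity constraint, and the three lower-bound constraints remain exactly as at $\widetilde\bc_{pk}^\star$ and are thus still satisfied. The scalar $t$ is then fixed to restore unit norm: the map $\psi(t)=\|\widetilde\bc_{pk}^\star\|^2+2t\,\bu^\trans\widetilde\bc_{pk}^\star+t^2\|\bu\|^2$ is continuous with $\psi(0)<1$ and $\psi(t)\to+\infty$, so by the intermediate value theorem there is a $t$ with $\psi(t)=1$. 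The resulting $\widehat\bc$ lies on the unit sphere, satisfies every constraint of \eqref{Pk_sca}, and shares the objective value of $\widetilde\bc_{pk}^\star$, hence is optimal for \eqref{Pk_sca}.

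The main obstacle is not the construction but the two facts that make it valid: the existence of the invariant direction $\bu$ and the confirmation that relaxing the norm equality does not strictly lower the attainable objective. The dimension count $2M-5>0$ secures the former, while the lower-bound argument of the second paragraph secures the latter, and together they establish the claimed hidden convexity. A secondary point worth stating explicitly is that, because the objective and all linear constraints are invariant along $\bu$, no re-verification of the surrogate quadratic constraints is needed: they were already linearized into the half-spaces $(\bh_{l,pk}^i)^\trans\widetilde\bc_{pk}+h_{l,pk}^i\geq\varepsilon$ that appear identically in both \eqref{Pk_sca} and \eqref{Pk_sca_relax}.
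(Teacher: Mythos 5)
Your proposal is correct and follows essentially the same route as the paper's appendix: solve the convex relaxation, observe that the two feasible sets and objectives coincide up to the ball relaxation and the constant $g_{pk}^i$, and, when the relaxed minimizer is interior, push it onto the unit sphere along a direction $\bu$ orthogonal to $\bg_{pk}^i$, $\widetilde\bc_0$, and $\bh_{l,pk}^i$, $l=1,2,3$ (this is exactly the construction referenced as \eqref{opt_c_construct}, with \eqref{obj_diff} verifying the unchanged objective). The only implicit hypothesis is the dimension count $2M>5$ guaranteeing such a $\bu$ exists, which holds comfortably in the paper's setting ($M=8$).
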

\begin{proof}
	See Appendix \ref{proof_Pk_sca_relax} of the supplemental material.
\end{proof}

Therefore, the convex Problem \eqref{Pk_sca_relax} can be efficiently solved using a convex optimization solver such as CVX\cite{grant2020cvx}. An optimal solution that satisfies the energy equality constraint, can then be readily constructed by following the procedure described in \eqref{opt_c_construct} and \eqref{obj_diff} of Appendix \ref{proof_Pk_sca_relax} in the supplemental material.

The final form of the designed block-MM based procedure to tackle $\bar \calP_k$, namely $\widehat \calP_k$, is summarized in Algorithm \ref{algo2}. 
The convergence properties of the proposed block-MM algorithm for solving $\bar \calP_k$ are established in the following proposition.
\begin{proposition}\label{proposition:3}
	Let $\widetilde \bc^i=[(\widetilde \bc_{1k}^{i})^\trans,(\widetilde \bc_{2k}^{i})^\trans,\cdots,(\widetilde \bc_{Nk}^{i})^\trans]^\trans$ be a sequence of points obtained through the block-MM algorithm, and $\widetilde f_k^i=f_k(\widetilde\bc_{1k}^i,\widetilde\bc_{2k}^i,\cdots\!,\! \widetilde\bc_{Nk}^i)$ be the objective value at the $i$-iteration. Then, $\widetilde f_k^i$ is a monotonically increasing sequence, converging to a finite value.
\end{proposition}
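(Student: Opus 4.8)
The plan is to deploy the standard two-part monotone-convergence argument of the block-MM framework: first show that each block update moves $\widetilde f_k^i$ monotonically in the direction asserted by the proposition through the surrogate sandwich, and then show that the sequence is uniformly bounded, so that monotonicity upgrades to convergence. The boundedness will come directly from the additive, Riccati-like structure of $\widehat\bJ_k$ in \eqref{approx_Jk}, which is precisely what makes a finite limit available for a sequence bounded in the claimed direction.

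For the monotonicity, I would fix the block $p$ updated at iteration $i+1$ and assemble the inequality \eqref{f_maj1}, the upper bound \eqref{lemma:1upper}, and the constraint linearization \eqref{cAc_approx} into a single surrogate $u_{pk}(\cdot\mid\widetilde\bc_{pk}^i)$ for the block restriction $\widehat f_{pk}$ of \eqref{hat_fpk}. By Proposition \ref{prop_fRconcave} and Lemma \ref{lemma:1}, this surrogate is tangent to $\widehat f_{pk}$ at $\widetilde\bc_{pk}^i$ and dominates it over the feasible region, while \eqref{cAc_approx} inner-approximates the floor constraints $[\widehat\bS_{pk}(\widetilde\bc_{pk})]_{ll}\ge\varepsilon$ so that any surrogate-feasible point is feasible for $\bar\calP_k$. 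Since $\widetilde\bc_{pk}^{i+1}$ solves the surrogate program \eqref{Pk_sca_relax}, lifted to the energy sphere through Lemma \ref{lemma_Pk_sca_realx}, the MM inequality forces $\widetilde f_k^{i+1}\ge\widetilde f_k^i$, and sweeping over all blocks $p\in\calN$ preserves the monotonicity because the frozen blocks keep $\bB_{pk}^i$ fixed at each step.

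Convergence is then secured by a uniform bound, which is the crux for an increasing sequence. In \eqref{approx_Jk} the measurement term $\sum_{n\in\calN}\bH_{nk}^\trans\widehat\bS_{nk}^{-1}\bH_{nk}$ is positive semidefinite, and the $\varepsilon$-floor enforced through $\mathbb{I}_{+}$ keeps each $\widehat\bS_{nk}$, hence each $\widehat\bS_{nk}^{-1}$, uniformly bounded along the iterates; consequently $\widehat\bJ_k\succeq(\bF^{-1})^\trans\widetilde\bJ_{k-1}\bF^{-1}\succ\bzero$ uniformly in $i$. Taking inverses and traces gives $\widetilde f_k^i\le\Tr\big(((\bF^{-1})^\trans\widetilde\bJ_{k-1}\bF^{-1})^{-1}\big)$, a finite constant independent of $i$, so the monotone sequence, being bounded above by this prior-information trace, converges to a finite limit.

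The main obstacle I anticipate lies in certifying the surrogate used in the monotonicity step: one must verify that the composite function obtained by chaining the second-order Taylor model of Lemma \ref{prop2}, the concavity tangent of Proposition \ref{prop_fRconcave}, and the two quadratic-form bounds of Lemma \ref{lemma:1} remains simultaneously tangent at $\widetilde\bc_{pk}^i$ and globally dominating on the feasible set even after the constraint linearization \eqref{cAc_approx}, so that the monotone step is genuine rather than an artifact of the successive approximations. Once this sandwich is in place and the feasibility of \eqref{Pk_sca_relax} is noted (immediate, since $\widetilde\bc_{pk}^i$ itself is feasible by the tightness in Lemma \ref{lemma:1}), the monotonicity and the uniform bound combine routinely to yield convergence to a finite value.
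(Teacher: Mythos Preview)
Your skeleton is the standard block-MM argument and matches what the paper presumably does in its supplementary proof, but you have the monotonicity direction backwards, and this cascades into the wrong boundedness step. The surrogate chain you assemble---the concavity tangent from Proposition~\ref{prop_fRconcave}, the quadratic majorant from \eqref{lemma:1upper}, and the constraint inner-approximation \eqref{cAc_approx}---produces a function $u_{pk}(\cdot\mid\widetilde\bc_{pk}^i)$ that \emph{dominates} $\widehat f_{pk}$ and is tangent at $\widetilde\bc_{pk}^i$. Since $\widetilde\bc_{pk}^{i+1}$ \emph{minimizes} this surrogate over a feasible set that contains $\widetilde\bc_{pk}^i$, the MM sandwich reads
\[
\widehat f_{pk}(\widetilde\bc_{pk}^{i+1})\;\le\;u_{pk}(\widetilde\bc_{pk}^{i+1}\mid\widetilde\bc_{pk}^i)\;\le\;u_{pk}(\widetilde\bc_{pk}^{i}\mid\widetilde\bc_{pk}^i)\;=\;\widehat f_{pk}(\widetilde\bc_{pk}^{i}),
\]
so the sequence is monotonically \emph{decreasing}, not increasing. (The word ``increasing'' in the proposition is evidently a typo: the algorithm is a minimization procedure, and the paper's own Fig.~\ref{f3} explicitly reports that the approximated objective decreases along the iterations.)

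Consequently, the bound you need for convergence is a \emph{lower} bound, not the upper bound $\Tr\big(((\bF^{-1})^\trans\widetilde\bJ_{k-1}\bF^{-1})^{-1}\big)$ you derive. That upper bound is mathematically correct but irrelevant here. The lower bound is immediate: the $\varepsilon$-floor on $[\widehat\bS_{nk}]_{ll}$ together with the positive-definite prior term keeps $\widehat\bJ_k\succ\bzero$ along the iterates, hence $\widehat f_k^i=\Tr(\widehat\bJ_k^{-1})>0$. A decreasing sequence bounded below by zero converges to a finite limit. Once you flip the inequality and swap the bound, the rest of your argument (surrogate tangency, feasibility of $\widetilde\bc_{pk}^i$ in \eqref{Pk_sca}, block-wise descent) is sound.
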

\begin{proof}
	See Appendix \ref{proof_propostion3} of the supplemental material.
\end{proof}

\begin{algorithm}
	\setstretch{1.1}
	\caption{The block-MM framework to solve $\widehat\calP_k$.}\label{algo2} 
	\algorithmicrequire{ \multiline{ $k$, $\widetilde \bc_{0}$, $\zeta$, $\bF$, $\widetilde\bJ_{k-1}$, $\bH_{nk}$, $\bA_{l,nk}$, $\ba_{l,nk}$, $a_{l,nk},n\!\in\!\calN$, \\$l\in\{1,2,3\}$, $\xi$.}} \\
	\algorithmicensure{ Optimized solutions $\{\widetilde\bc_{nk}^\star\}_{n\in\calN}$.} 
	\begin{algorithmic}[1]
		\State Let $i=1$ and $\widetilde \bc_{nk}^0=\widetilde\bc_0, n\in\calN$;
		\While{$|\widehat f_k(\{\widetilde\bc_{nk}^i\}_{n\in\calN})-\widehat f_k(\{\widetilde\bc_{nk}^{(i-1)}\}_{n\in\calN})|\geq\xi$}
		\For {$p=1,2,\cdots,N$}
		\State \multiline{Compute $\bg_{pk}^i, g_{pk}^i$ as defined in \eqref{ggpk}-\eqref{gpk};}
		\State \multiline{Compute $\bh_{l,pk}^i, h_{l,pk}^i$ as given in \eqref{hhpk}-\eqref{hpk};}
		\State Obtain $\widetilde\bc_{pk}^i$ via solving Problem \eqref{Pk_sca_relax};
		\EndFor
		\State $i=i+1$;
		\EndWhile
		\State Output $\widetilde\bc_{nk}^\star=\widetilde\bc_{nk}^i, n\in\calN$.
	\end{algorithmic}
\end{algorithm}
\section{Numerical  Results} \label{SecIV}
Numerical experiments are conducted to evaluate the performance of the proposed method for radar network code design. Consider a radar network with $N=4$ nodes, operating at $X$ band (the carrier frequency is 10 GHz), using frequency orthogonal waveforms. The antenna of each node consists of a uniform linear array of $N_\rmr =$ 8 elements spaced at half-wavelength intervals. Their transmitted trains utilize $M =$ 8 pulses with a PRI of $T_\rmr = $ 250 ${\upmu}$s. The chirp signal has a bandwidth of $B=$ 5 MHz and a pulsewidth of $T_\rmp=$ 10 $ \upmu$s. The sampling time is $\Delta t=1/2B$ leading to $N_\rmp=$ 100. The observation sampling interval between consecutive tracking frames is set to $T=$ 1 s. 

Consider a single point-like target moving in the radar system surveillance area, as depicted in Fig. \ref{f1}. The four radar nodes are located at (20, 10)~km, (25, 16)~km, (35, 16)~km, and (40, 10)~km, respectively. It is assumed that the antenna beams of the radar nodes consistently point towards the target. The initial location and velocity of the target are (30, 55) km and (80, 240) m/s, respectively. The target power $|\alpha_{nk}|^2$ is assumed to remain constant across all frames, and are set as $|\alpha_{1k}|^2=$ 0.035, $|\alpha_{2k}|^2=$ 0.099, $|\alpha_{3k}|^2=$ 0.176, $|\alpha_{4k}|^2=$ 0.051.  The signal-independent interference covariance matrix\cite{de2008code, de2009code} in the slow-time domain ${\bSigma_{ \rmt, nk}}$, and spatial domain ${\bSigma_{ \rms, nk}}$ are modeled as ${\bSigma_{ \rmt, nk}}(i, j) = \rho_\rmt^{|i-j|}$, ${\bSigma_{ \rms, nk}}(i, j) = \rho_\rms^{|i-j|},n\in\calN$, where the one-lag correlation coefficients are $\rho_\rmt =$ 0.8 and $\rho_\rms=$ 0.8, respectively. The similarity sequence is considered as a P3 code \cite[pp. 118-122]{levanon2004radar}.

The initial IM $\bJ_{0}$ is initialized as a diagonal matrix whose diagonal entries are all $10^{-10}$. The regularization parameter for the approximation $\widehat \bS_{nk}(\widetilde\bc_{nk})$ is set to $\varepsilon=10^{-8}$. The stopping parameter of the developed block-MM procedure is fixed to $\xi=10^{-3}$.
\subsection{Analysis of Convergence}
This subsection analyzes the convergence of the approximated PCRLB within a frame, as well as the PCRLB behavior across frames achieved by the block-MM algorithm. 

\begin{figure}[htbp]
	\subfigure[]{\label{f3}
		\includegraphics[width=1.58in]{./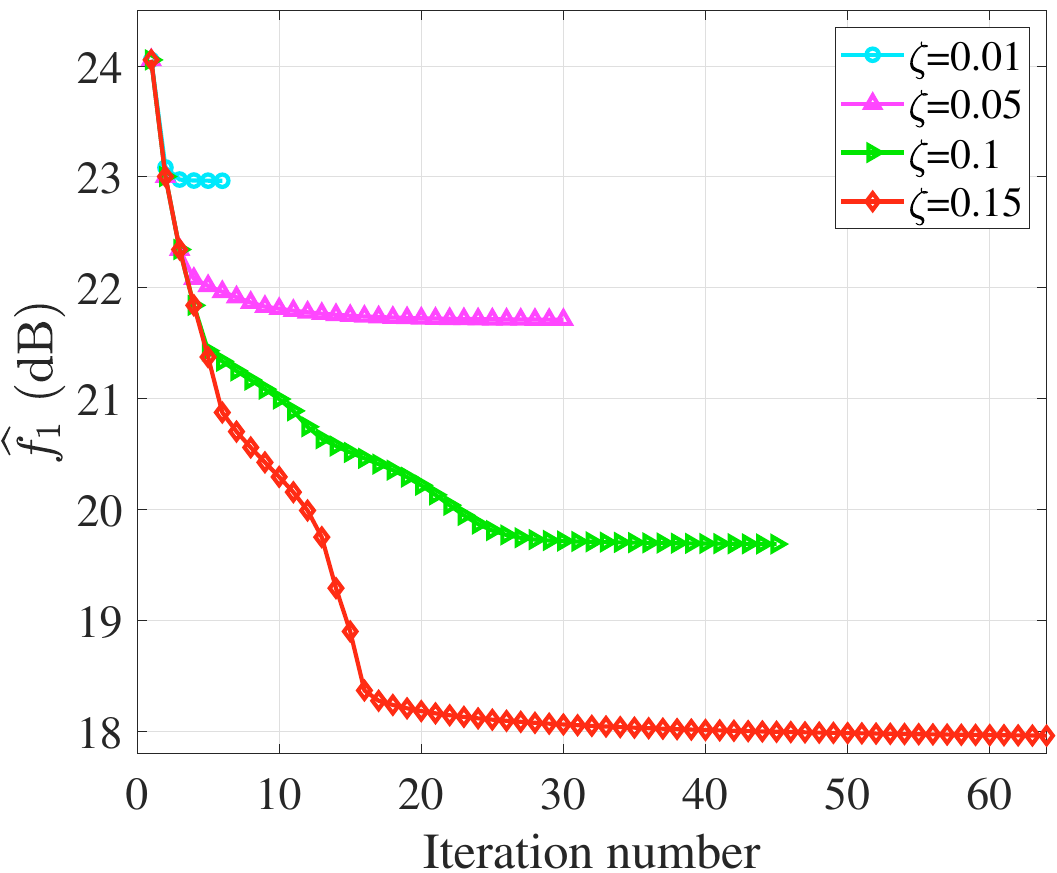}
	}\hspace{-3mm}
	\subfigure[]{\label{f4a}
		\includegraphics[width=1.58in]{./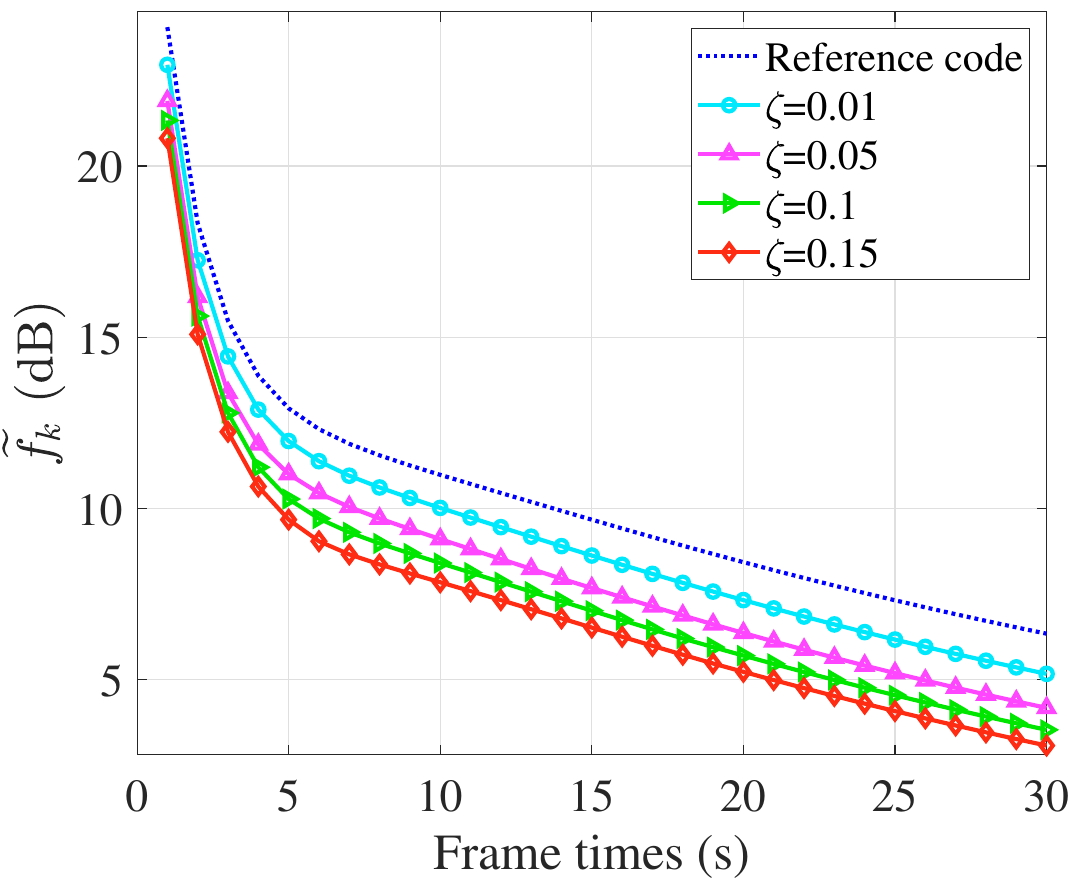}
	}
	\centering
	\caption{\small (a) The approximated PCRLB performances at the first frame versus iteration number. (b) The PCRLB performances versus frame time achieved by the designed codes and the reference sequence.}\label{f4}
\end{figure}

At first, the approximated PCRLB in $\widehat P_k$ is assessed, using the first frame (i.e., $k=1$) as an illustrative example. Fig. \ref{f3} plots the approximated objective $\widehat f_k$ in \eqref{approx_fk} versus iterations for $\zeta=$ 0.01, 0.05, 0.1, 0.15. Inspection of the curves highlight that the approximated objective value monotonically decreases along the iterations. Besides, the larger the similarity parameter $\zeta$ the lower the achieved objective values due to the enlarged feasible set of $\widehat P_1$.

In Fig. \ref{f4a}, the PCRLB trace $\widetilde f_k$ is shown versus the frame time for the designed sequences with distinct values of $\zeta$, respectively. For comparison, Fig. \ref{f4a} also reports the PCRLB achieved using the reference code throughout all frames. Inspection of the curves highlights that $\widetilde f_k$ decreases monotonically over the frames, and the designed sequences consistently outperform the reference code. In particular, at the 30-th frame, the codes obtained under $\zeta=$ 0.15 realize a remarkable PCRLB gain of 3.5 dB compared to the reference. 


\begin{figure}
	\centering
	\includegraphics[width=2.8in]{./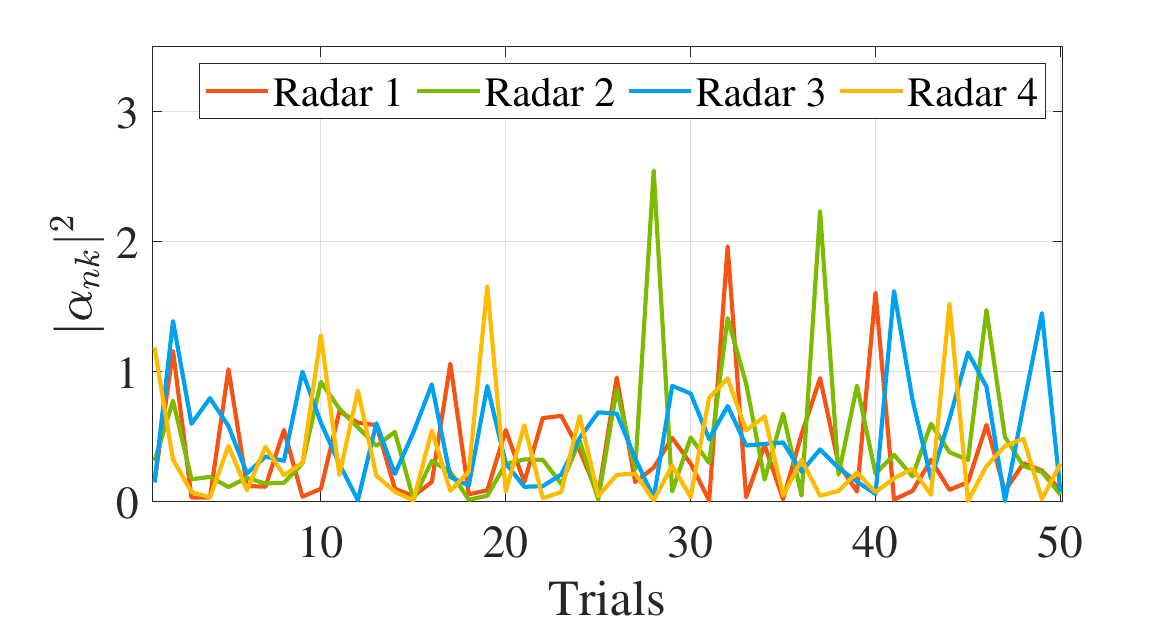}
	\caption{\small Target power at the four radar nodes over 50 trials.}\label{f5}
\end{figure}
\subsection{Analysis of Detection Probability and Target State Estimation Performance}
In this section, 50 Monte Carlo trials with different values of $|\alpha_{nk}|^2,n\in\calN$, are conducted to evaluate the target detection probability and target state estimation accuracy. For each trial and each radar node, a value of $|\alpha_{nk}|^2$ is independently drawn from an exponential distribution with parameter 0.5 and kept constant across all frames of that trial. The values of $|\alpha_{nk}|^2$ for the four nodes over all trials are shown in Fig. \ref{f5}. In each trial, the PCRLB is optimized using the proposed method with the corresponding $|\alpha_{nk}|^2$. The  final performance is obtained by averaging the results across all trials.

\subsubsection{Detection Probability}As a benchmark for target detection, let us consider only the optimization of $P_{\rmd,nk}$ under the energy and similarity constraints. According to the expression of $P_{\rmd, nk}$ in \eqref{Pd}, its maximization can be accomplished by optimizing the SINR over the radar code\cite{de2008code}. In this context, the waveform design for radar nodes is separable, allowing the code synthesis problem for the $n$-th radar to be formulated as 
\beq \label{onlySINRopt}
\left\{\begin{array}{lll}
	\min\limits_{\bc_{nk}} & \bc_{nk}^\ctrans\bM_{0,nk}\bc_{nk}\\
	\subto & \| \bc_{nk}\|^2 = 1,n\in\calN\\
	& \| \bc_{nk}-\bc_0\|^2 \leq \zeta,n\in\calN
\end{array}.
\right.
\eeq 
The optimal solution to Problem \eqref{onlySINRopt} can be found using the procedure outlined in \cite{li2006signal}. The corresponding detection probability for each radar node is used as a benchmark upper bound, denoted by $P_{\rmd, nk}$.

\begin{figure*}[]
	\subfigure[]{\label{f6a}
		\includegraphics[width=1.5in]{./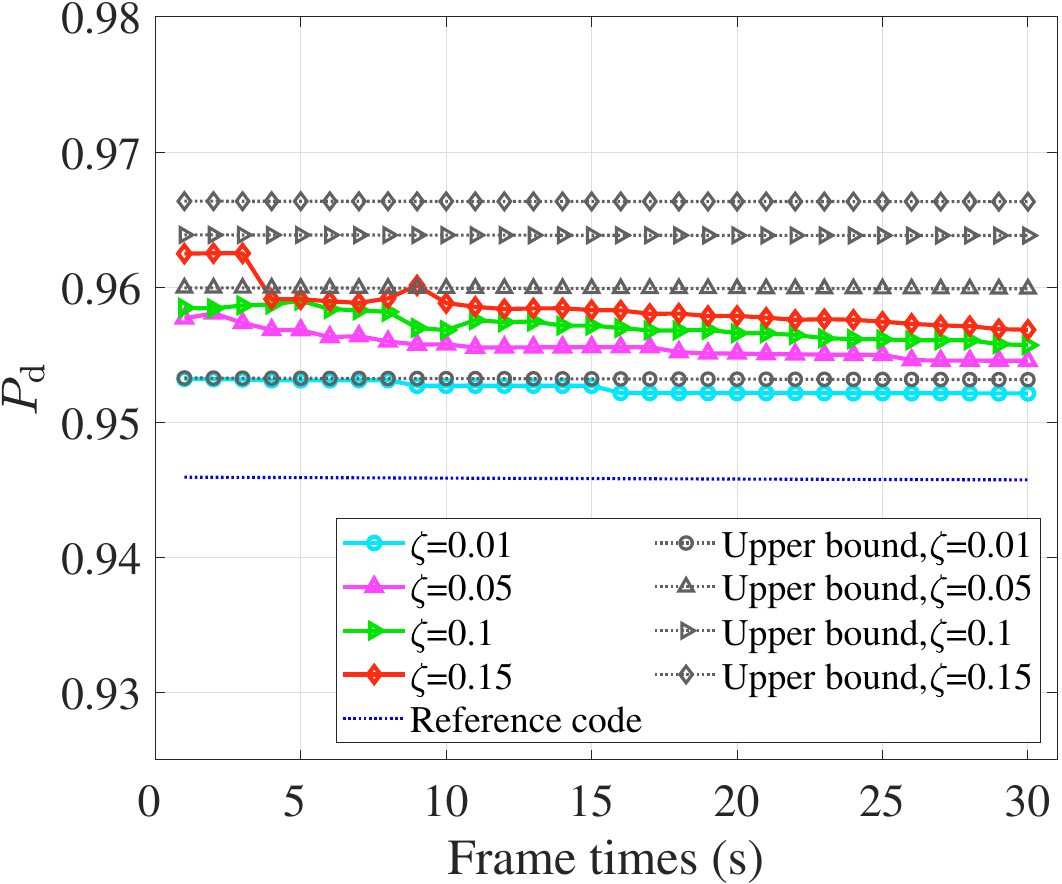}
	}\hspace*{-3mm}
	\subfigure[]{\label{f6b}
		\includegraphics[width=1.5in]{./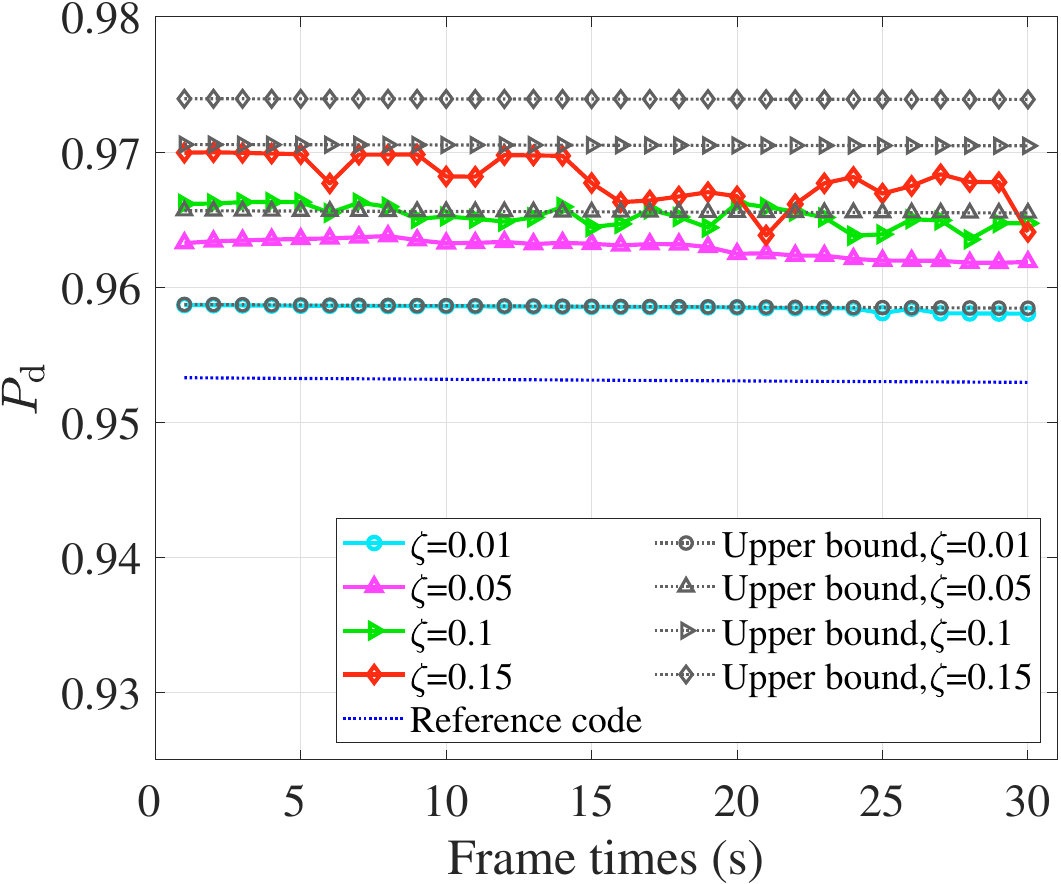}
	}\hspace*{-3mm}
	\subfigure[]{\label{f6c}
		\includegraphics[width=1.5in]{./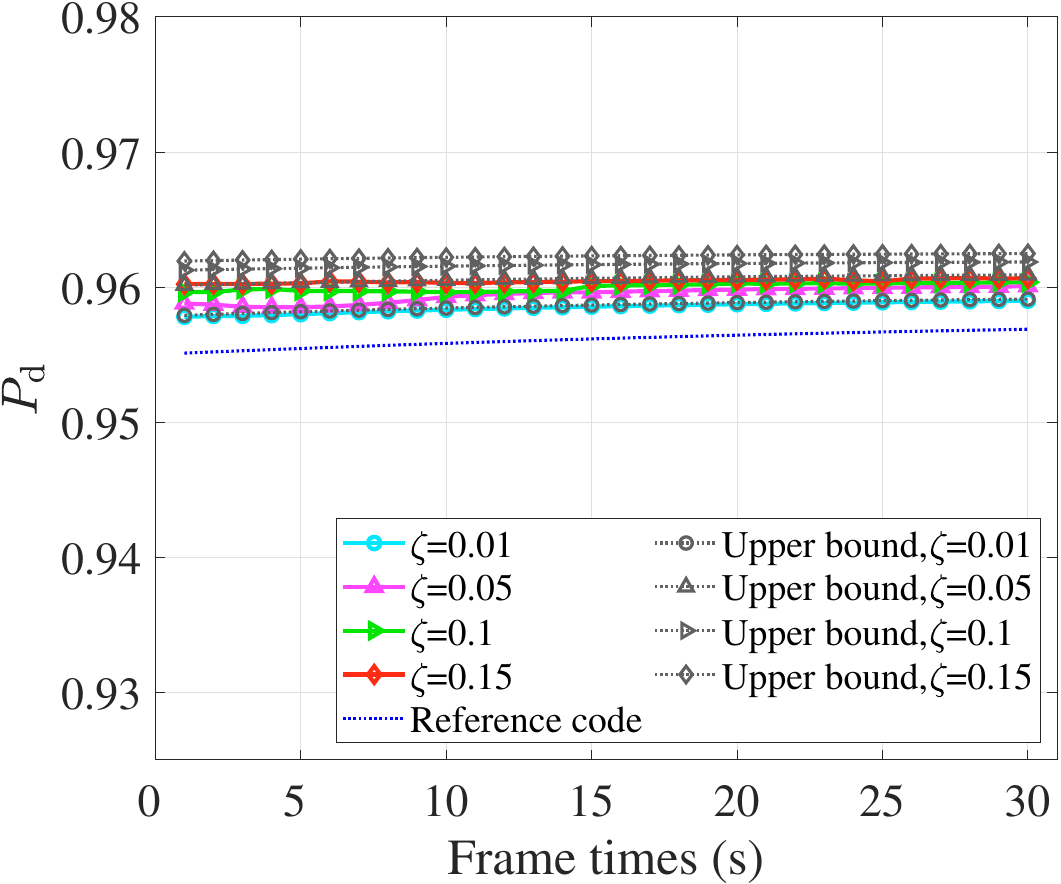}
	}\hspace*{-3mm}
	\subfigure[]{\label{f6d}
		\includegraphics[width=1.5in]{./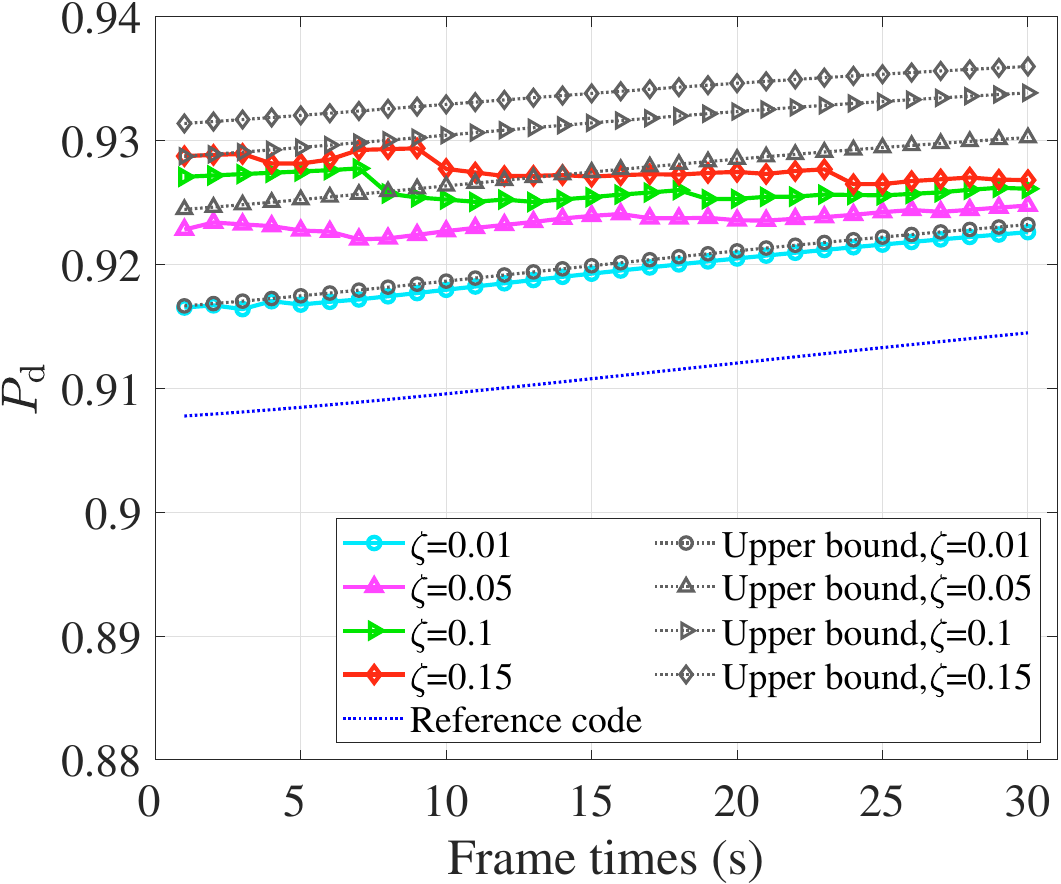}
	}
	\centering
	\caption{\small Average target detection probability versus frame time over 50 trials for the code designed with $\zeta=$ 0.01, 0.05, 0.1, 0.15, their upper bounds and the reference sequence: (a) radar 1, (b) radar 2, (c) radar 3, and (d) radar 4.}\label{f6}
\end{figure*}

Figures \ref{f6}\subref{f6a}, \subref{f6b}, \subref{f6c}, and \subref{f6d} reports the detection probabilities averaged over the 50 Monte Carlo trials for each node with multiple values of the similarity parameter, along with their corresponding benchmark upper bounds and the performance achieved by the reference code. The results indicate that the detection performance achieved by the proposed codes exceeds that of the reference sequence across all tracking frames. In addition, the obtained $P_{\rmd,nk}$ values remain below the benchmark upper bounds (for the same similarity restrictions). This is because our design paradigm optimizes the target state estimation accuracy, taking both detection performance and measurement estimation accuracy into account, rather than focusing solely on maximizing the detection probability. In addition, most detection probabilities increase as $\zeta$ increases due to the larger feasible set. However, for radar 2, the $P_{\rmd,nk}$ at frames 21 and 30, achieved by the code designed with $\zeta=$ 0.15 is lower than that with $\zeta=$ 0.1. This discrepancy arises because our figure of merit is the PCRLB, rather than solely the detection.
\begin{figure}[htbp]
	\centering
	\includegraphics[width=2.7in]{./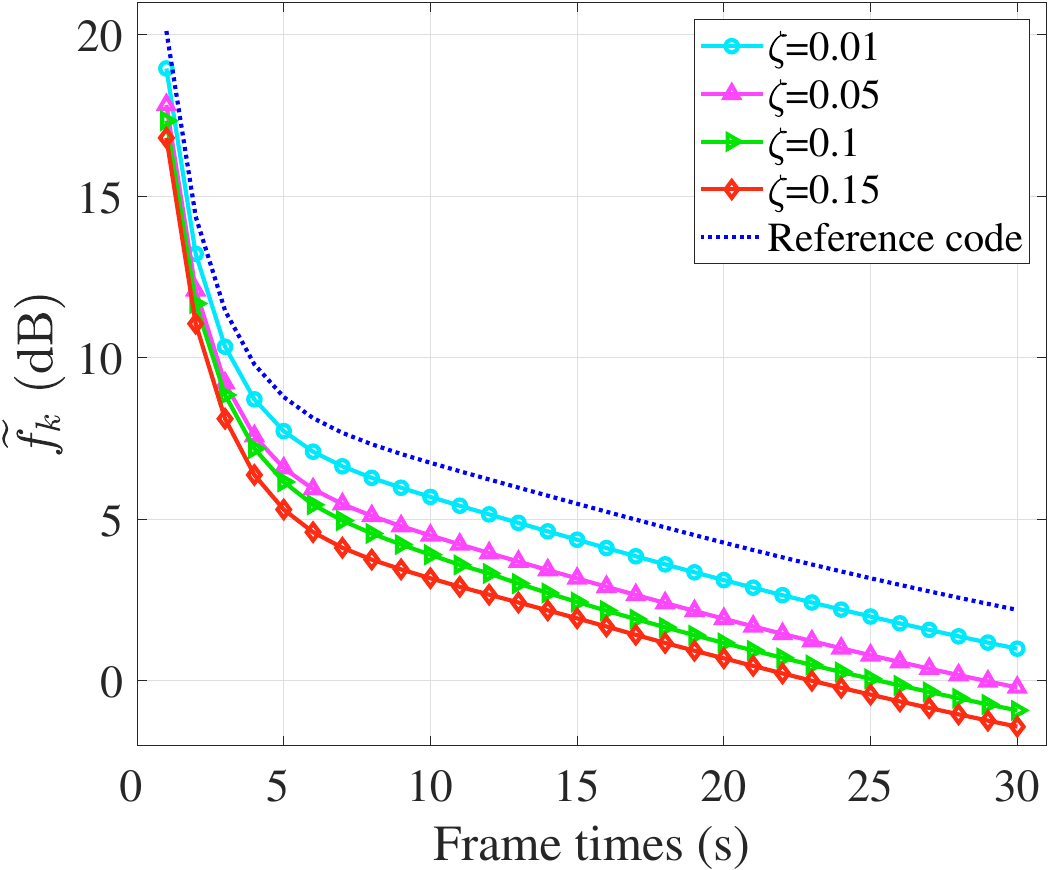}
	\caption{\small Average PCRLB trace versus frame time over 50 trials for the code designed with $\zeta=$ 0.01, 0.05, 0.1, 0.15, and the reference sequence.}\label{f7}
\end{figure}

\begin{figure*}[htbp]
	\subfigure[]{\label{f8a}
		\includegraphics[width=1.53in]{./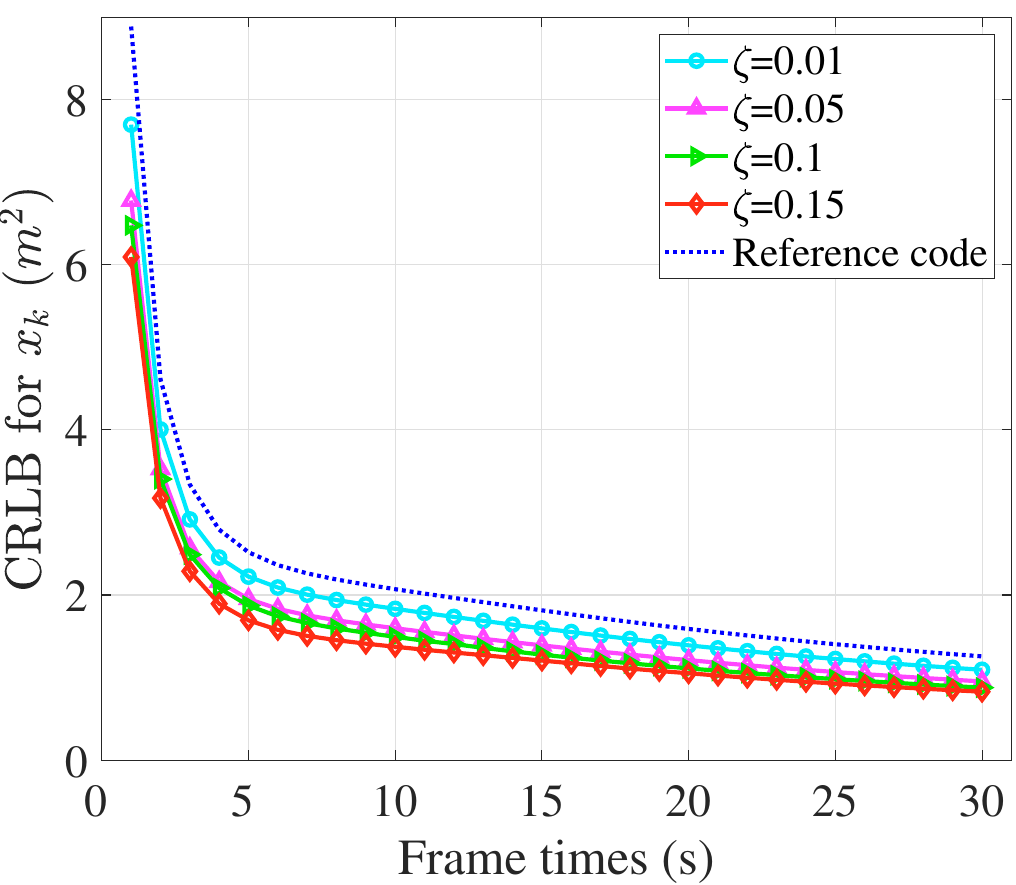}
	}\hspace*{-2mm}
	\subfigure[]{\label{f8b}
		\includegraphics[width=1.6in]{./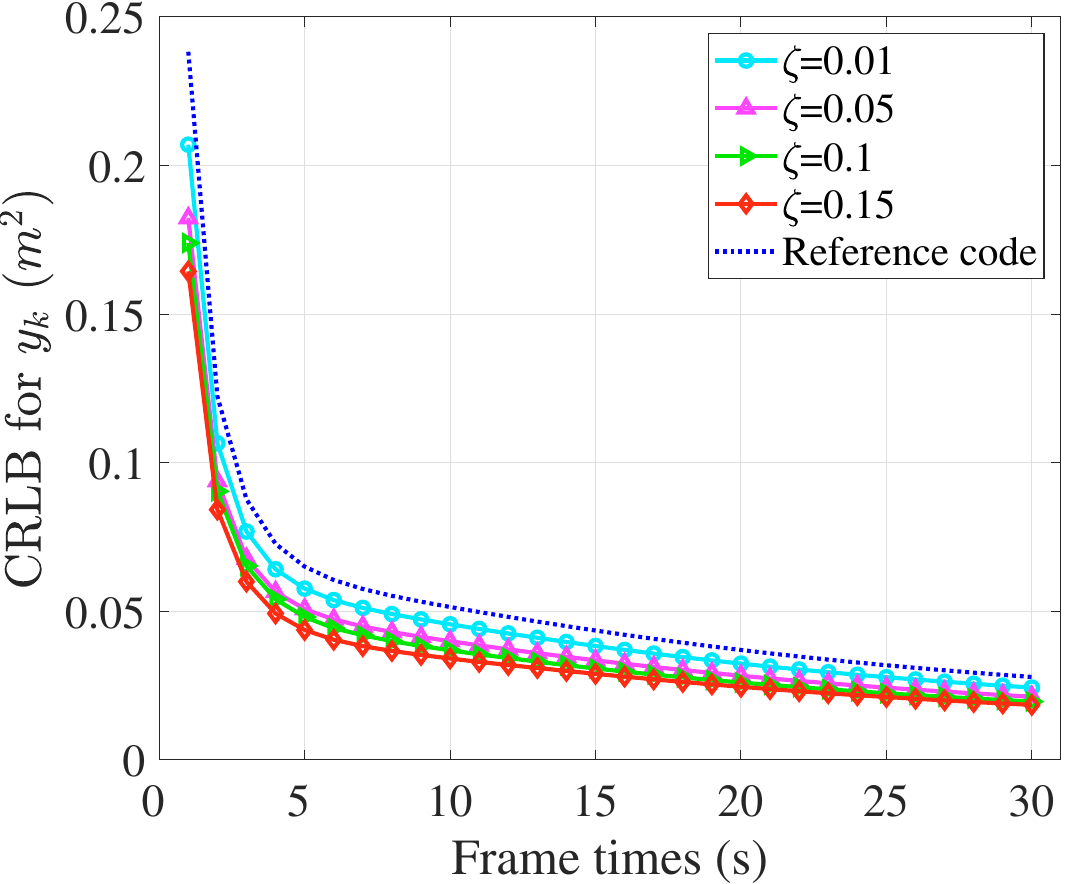}
	}\hspace*{-2mm}
	\subfigure[]{\label{f8c}
		\includegraphics[width=1.6in]{./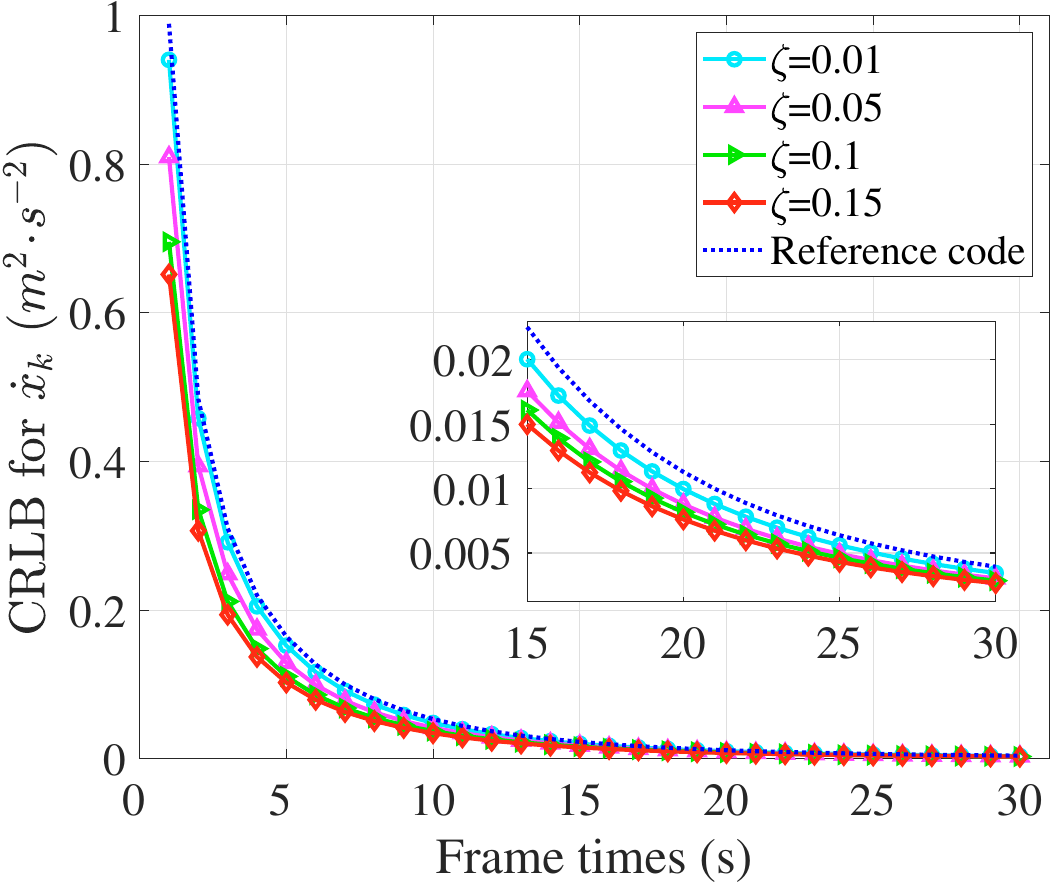}
	}\hspace*{-2mm}
	\subfigure[]{\label{f8d}
		\includegraphics[width=1.6in]{./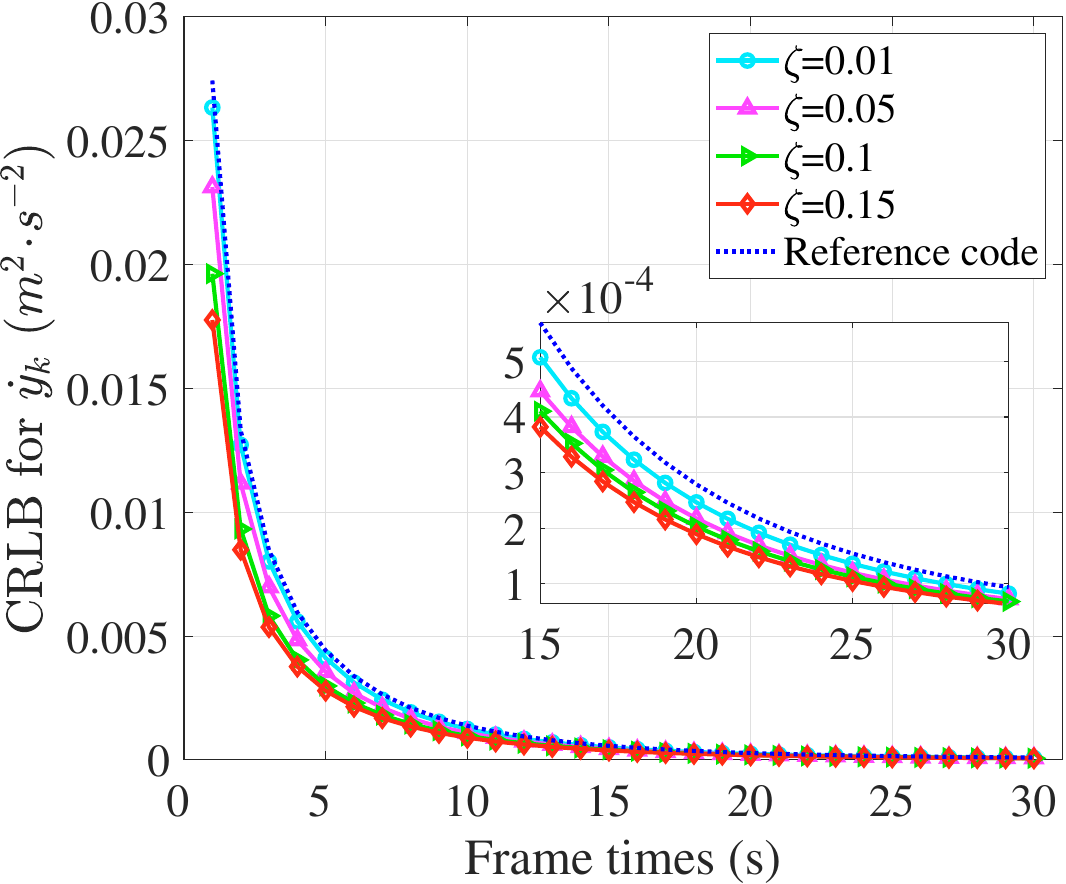}
	}
	\centering
	\caption{\small Average CRLB for the target state estimation versus frame time over 50 trials for the code designed with $\zeta=$ 0.01, 0.05, 0.1, 0.15, and the reference sequence: (a) $x_k$  (${\rm m}^{2}$), (b) $y_k$ (${\rm m}^{2}$), (c) $\dot x_k$ (${\rm m}^{2}\!\cdot\!{\rm s}^{-2}$), and (d) $\dot y_k$ (${\rm m}^{2}\!\cdot\!{\rm s}^{-2}$).}\label{f8}
\end{figure*}
\subsubsection{Target State Estimation}In Fig. \ref{f7}, the frame-by-frame PCRLB trace, averaged over 50 Monte Carlo trials, is shown for codes designed with different values of $\zeta$ as well as for the reference sequence. The curves show that the designed code outperforms the reference over all frames. Furthermore, as expected, increasing either the number of frames or the parameter $\zeta$ results in a decrease in the PCRLB trace. The CRLB for the estimation of $x_k, y_k, \dot x_k, \dot y_k$ are depicted in Figs. \ref{f8a}, \subref{f8b}, \subref{f8c}, and \subref{f8d}, respectively, using the reference and the optimized radar codes assuming $\zeta=$ 0.01, 0.05, 0.1, 0.15. Inspection of the curves emphasizes that the estimation accuracy of the synthesized code is significantly better than that of the reference code over all the frames. Additionally, the estimation accuracies associated with the $x$-axis are worse than those along the $y$-axis. This is because the distance between the target and the radar node along the $x$-axis is smaller than that on the $y$-axis.

\subsection{Robustness Analysis}
\begin{figure}
	\subfigure[]{\label{f9a}
		\includegraphics[width=2.7in]{./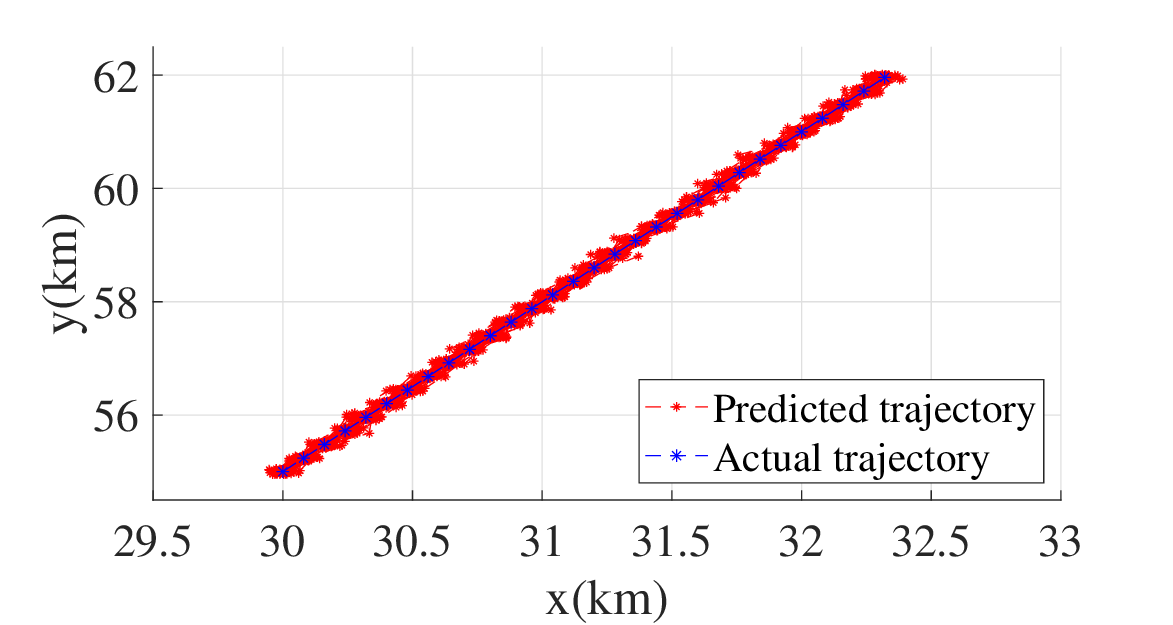}
	}
	\vspace*{-3mm}
	\subfigure[]{\label{f9b}
		\includegraphics[width=3in]{./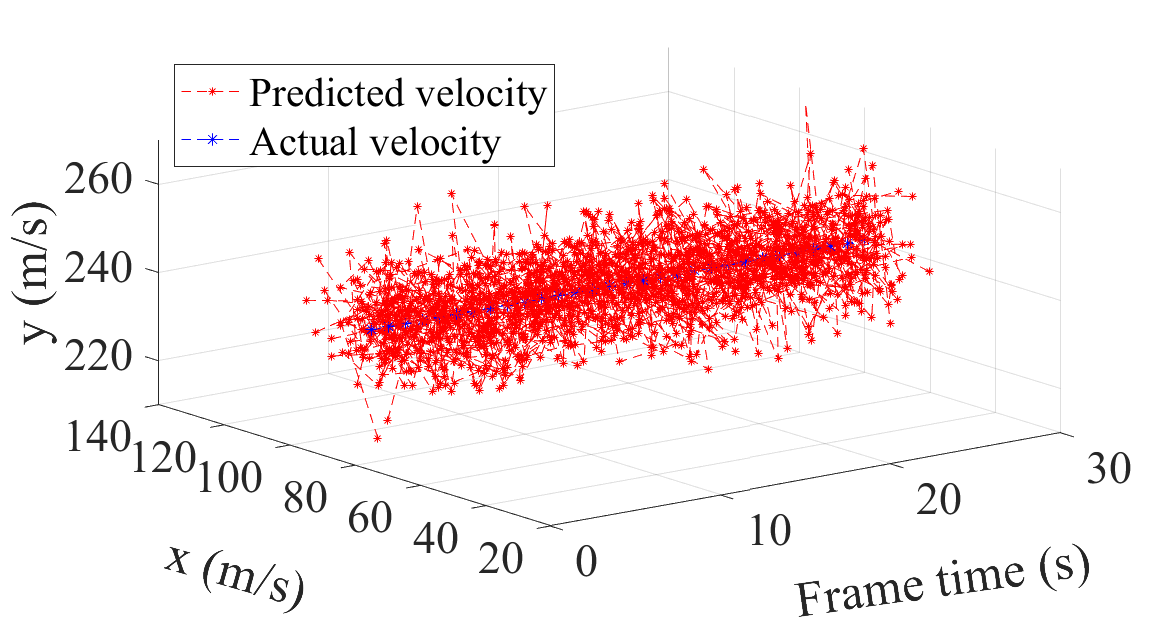}
	}
	\centering
	\caption{\small Predicted target state and actual state over 50 trials: (a) trajectory and (b) velocity.}\label{f9}
\end{figure}

\begin{figure}
	\subfigure[]{
		\includegraphics[width=2.7in]{./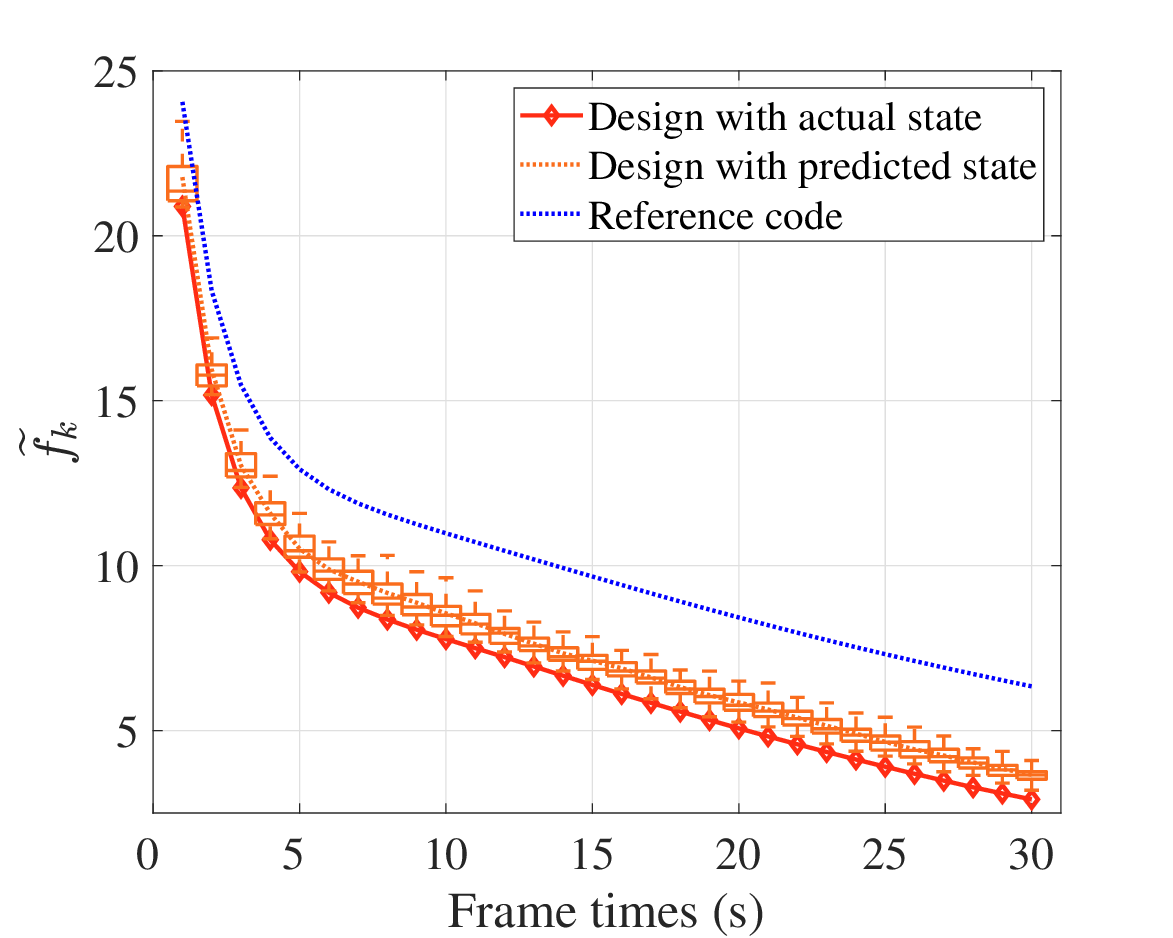}
	}
	\centering
	\caption{\small Average PCRLB with inaccurate predicted target state.}\label{f10}
\end{figure}

The developed design procedure relies on the predicted target state at the next frame. This subsection examines the behavior of the procedure in the presence of a mismatch between the predicted (nominal) and actual target states. The predicted target state $\hat \bx_k$ is assumed to follow a multivariate distribution with mean vector $\bx_k$ and covariance matrix $\bSigma_k\in\bbS^{4}_{++}$, which is given by
\beq
	\bSigma_k = \left[\!\!\bea 
	\sigma_\rmr^2\bI_2 & \bzero_{2\times 2} \\
	\bzero_{2\times 2} & \sigma_\rmv^2\bI_2
	\eea\!\!\right],
\eeq
where $\sigma_\rmr^2$ and $\sigma_\rmv^2$ represent covariances for target position and velocity, respectively. The radar codes designed based on the predicted target state are used to evaluate the mismatch loss when applied to the actual target state. The average PCRLB is adopted as performance metric, and is computed over 50 Monte Carlo trials on the predicted target state. For these tests, the values of $\sigma_\rmr^2$ and $\sigma_\rmv^2$ are set to the squares of the range and velocity resolutions, i.e., 900 $m^2$ and 56.25 $m^2\!\cdot \!s^{-2}$, respectively. The predicted and actual target trajectories and velocities across the 50 trials are shown in Fig. \ref{f9a} and Fig. \ref{f9b}. 

Fig. \ref{f10} depicts the average PCRLB achieved by the devised code with $\zeta=$ 0.15 obtained using the predicted state as well as the actual state. Moreover, the average PCRLB for the reference code, associated with the actual target state, is also plotted as a term of comparison. The box for each frame represents the range of the  trace PCRLB trace values obtained across the conducted trials, while the curves illustrate the corresponding average values. Under the given uncertainty in the target state, the average PCRLB performance is close to that of the error-free condition exhibiting only minor fluctuations. The curves also emphasize that the devised code significantly outperforms the P3 sequence in terms of target state estimation accuracy.

\section{Conclusion} \label{SecV}
This paper presented a novel approach for synthesizing slow-time coded waveforms for single target tracking in radar networks operating under colored Gaussian interference. The optimization framework was based on minimizing the trace of the PCRLB for target state estimation while adhering to power constraints and waveform similarity requirements. By approximating the objective function leveraging the second order Taylor expansion and employing a bespoke block-MM algorithm, an efficient approximation solution technique with guaranteed convergence was developed. The optimized waveforms monotonically reduce the PCRLB trace over frames, and improve target state estimation accuracy compared to reference sequences. Furthermore, the design demonstrates robustness to moderate mismatches in the assumed target state, highlighting its practical effectiveness. Potential future research avenues might include the joint optimization of radar positions \cite{aubry2023robust} and codes in the radar network, as well as the extension of our work to face with signal-dependent cluttered environment\cite{fan2024airborne}.

	\section*{Acknowledgment}
	This research activity has been conducted during the visit of Tao Fan at the Universit\`a degli Studi di Napoli “Federico II”, DIETI under the local supervision of Prof. A. Aubry.  and  Prof. A. De Maio.

\bibliography{./lib/TaoWaveformRefs, ./lib/TaoWaveformForTrackingRefs}

\end{document}